\pdfoutput=1
\documentclass[a4paper, onecolumn, nopdfoutputerror, noarxiv]{quantumarticle}

\usepackage[utf8]{inputenc}
\usepackage{amsmath, amssymb, amsfonts, amsthm, mathtools}
\usepackage[colorlinks=true, linkcolor=blue, citecolor=blue, urlcolor=blue]{hyperref}

\newcommand{\bra}[1]{\left\langle #1 \right|}
\newcommand{\ket}[1]{\left| #1 \right\rangle}

\DeclareMathOperator{\Tr}{Tr}
\DeclareMathOperator{\supp}{supp}
\DeclareMathOperator{\Ran}{Ran}

\DeclareMathOperator{\MD}{MD}

\newtheorem{theorem}{Theorem}[section]
\newtheorem{lemma}[theorem]{Lemma}
\newtheorem*{theorem*}{Theorem}
\newtheorem{proposition}[theorem]{Proposition}
\newtheoremstyle{myremark}
  {3pt}
  {3pt}
  {\normalfont}
  {}
  {\bfseries}
  {.}
  {.5em}
  {}

\theoremstyle{myremark}
\newtheorem{remark}{Remark}
\newtheorem*{remark*}{Remark}

\title{Zero-Uncertainty States Relative to Observable Algebras}
\author{Jiayu Ran}
\affiliation{Department of Physics,  Fudan University,  Shanghai 200433,  China}
\email{makisekurisu10000@gmail.com}

\begin{document}

\maketitle

\begin{abstract}
We study zero-uncertainty states with quantum memory from an operator-algebraic perspective,  which naturally accommodates degenerate projective-valued measurements. In the equal-dimension setting,  we prove a rigidity theorem for purity and maximal entanglement. We then analyze two mechanisms by which this rigidity can fail: one arising from proper observable subalgebras,  and the other from allowing larger memory dimensions. In these cases,  we give corresponding algebraic decomposition and representation-theoretic descriptions,  and compare their mathematical structure with their physical interpretation. Finally,  we present an example from quantum steering to illustrate how our framework helps resolve a concrete physical question in a specific setting.
\end{abstract}

\section{Introduction}

In quantum theory,  uncertainty is commonly viewed as a fundamental limitation on the predictability of incompatible observables \cite{Heisenberg1927}. When quantum memory is available,  however,  this limitation can be substantially altered by correlations and entanglement,  leading to a rich body of work on uncertainty relations with memory and their operational significance \cite{Berta2010, Coles2017}. In this setting,  Zhu introduced the term zero-uncertainty states (ZUS),  in contrast to minimum-uncertainty states,  and systematically studied it \cite{Zhu2021}. Zhu also showed that they admit a clear operational interpretation in quantum steering. This naturally leads to the following structural and operational question: for a given family of observables on Alice's side,  which bipartite states allow Bob's memory to eliminate the corresponding uncertainty completely?

In the finite-dimensional setting,  Zhu developed a very elegant and powerful graph-theoretic treatment of ZUS for non-degenerate projective measurements \cite{Zhu2021}. That approach is highly effective in the non-degenerate case,  but it is designed for
eigenbasis data and does not immediately extend to finite-dimensional observable
algebras generated by degenerate PVMs. The aim of the present work is to develop a complementary operator-algebraic framework that naturally accommodates such situations. Rather than organizing the problem through transition graphs,  we formulate it directly in terms of projective measurements,  completely positive maps,  and the algebra generated by the observables.

Our main results show that the rigidity of ZUS is controlled jointly by the observable algebra generated on Alice's side and by the representation of that algebra on Bob's memory. When $H_A\simeq H_B$ and the generated algebra is the full matrix algebra $B(H_A)$,  every common ZUS is forced to be pure and maximally entangled. By contrast,  when the generated algebra is a proper finite-dimensional unital $*$-subalgebra,  one can explicitly construct common ZUS that are not globally maximally entangled. More generally,  for an arbitrary finite-dimensional observable algebra $\mathcal A$,  we derive a block normal form for $\mathcal A$-ZUS in terms of finite-dimensional $*$-representations and a commuting memory state. This yields a unified algebraic explanation of how zero-uncertainty correlations are constrained,  and how the rigid equal-dimension picture can fail either through a proper observable algebra or through extra multiplicity in Bob's memory representation.

For expository clarity,  we first treat the equal-dimension case $H_A\simeq H_B$, 
where the rigidity phenomenon appears in its sharpest form: in the full-algebra case, 
zero uncertainty forces purity and maximal entanglement. A more general
dimension-flexible statement,  allowing larger memory systems on Bob's side,  is derived
later from the general classification theorem. Although one could reverse this order
and deduce the equal-dimension case as a corollary of the more general result,  doing so
does not substantially shorten the argument and tends to obscure the underlying
rigidity mechanism.

Finally,  we explain how this result can be interpreted in terms of a quantum steering
task,  in which common ZUS are exactly the states realizing perfect coarse-grained
steering for a family of possibly degenerate projective measurements.

\subsection*{Main contributions}

\begin{enumerate}
    \item We develop an operator-algebraic framework for ZUS based on the completely positive map
    \[
    \Lambda(X)=\Tr_A[(X^T\otimes I)\rho]
    \]
    and its normalized unital version
    \[
    \Phi(X)=\rho_B^{-1/2}\Lambda(X)\rho_B^{-1/2}.
    \]
    This framework naturally accommodates degenerate PVMs.

    \item We prove a rigidity theorem: if $H_A\simeq H_B$ and a family of PVMs generates the full algebra $B(H_A)$,  then every common ZUS for that family is pure and maximally entangled.

    \item We prove the converse algebraic obstruction: every proper finite-dimensional unital $*$-subalgebra admits explicitly constructible common ZUS that are not globally maximally entangled.

    \item We derive a block normal form for $\mathcal A$-ZUS in terms of finite-dimensional $*$-representations and a commuting memory state on the corresponding multiplicity spaces.

    \item We discuss the case $\dim H_A < \dim H_B$,  where the rigid equal-dimension picture is replaced by subsystem maximal entanglement together with an ancillary memory state. We further show that the two forms of rigidity breaking studied here arise from the same underlying mathematical mechanism,  although the corresponding physical operations need not naturally coincide.

    \item We provide an operational interpretation of the results in terms of quantum
    steering. In particular,  we show that common ZUS are exactly the bipartite states
    realizing perfect distinguishability of the steering assemblage for every setting in a
    family of possibly degenerate projective measurements,  thereby giving a finite-dimensional
    operator-algebraic characterization of perfect coarse-grained steering.
\end{enumerate}

\subsection*{Open direction}

A complete classification of all $\mathcal A$-ZUS remains open. The present results suggest that the operator-algebraic approach developed here provides a natural framework for that problem.


\section{Definition of ZUS}

For self-containment,  we introduce the definition and give a brief example here.

A state $\rho \in \text{Lin}(H_A\otimes H_B)$ is called a \emph{zero-uncertainty state} (ZUS) with respect to a PVM $K_i=\{P_{i, \alpha}\}$ on $H_A$  if the operators
\begin{equation}
Z_{i, \alpha}:=\Tr_A[(P_{i, \alpha}\otimes I)\rho]\in B(H_B)
\end{equation}
are perfectly distinguishable. In our context,  this is equivalent to pairwise orthogonality of supports:
\begin{equation}
\supp(Z_{i, \alpha})\perp \supp(Z_{i, \beta}), \qquad \alpha\neq\beta.
\end{equation}

We use the spin measurements as an example,  let
\begin{equation}
|\Phi^+\rangle = \frac{1}{\sqrt{2}}(|00\rangle + |11\rangle) \in \mathbb{C}^2 \otimes \mathbb{C}^2.
\end{equation}

\begin{equation}
\mathbf{Bell} =|\Phi^+\rangle \langle\Phi^+|
\end{equation}

\begin{equation}
\mathbf{Mix} = \frac{1}{2} (|00\rangle\langle 00| + |11\rangle\langle 11|)
\end{equation}

and
\begin{equation}
S_1 = \Bigl\{ \ \{|0\rangle\langle 0|,  \ |1\rangle\langle 1|\}, \quad \{|+\rangle\langle +|, \ |-\rangle\langle -|\} \ \Bigr\}, 
\end{equation}

\begin{equation}
S_2 = \Bigl\{ \ \{|0\rangle\langle 0|,  \ |1\rangle\langle 1|\} \ \Bigr\}, 
\end{equation}

We claim $\mathbf{Bell}$ is a common ZUS of $S_1$ and $S_2$,  whereas $\mathbf{Mix}$ is a ZUS of $S_2$ only(since $S_2$ contains only one PVM,  the term $\textit{common}$ is omitted).

The corresponding conditional operators on Bob's side are
\begin{equation}
\text{Tr}_A[(|0\rangle\langle 0| \otimes I)|\Phi^+\rangle\langle\Phi^+|] = \frac{1}{2} |0\rangle\langle 0|, 
\end{equation}
\begin{equation}
\text{Tr}_A[(|1\rangle\langle 1| \otimes I)|\Phi^+\rangle\langle\Phi^+|] = \frac{1}{2} |1\rangle\langle 1|, 
\end{equation}
which are orthogonal. Thus $|\Phi^+\rangle$ is a ZUS for the $Z$-basis measurement.
Since also
\begin{equation}
|\Phi^+\rangle = \frac{1}{\sqrt{2}}(|++\rangle + |--\rangle), 
\end{equation}
the same argument shows that it is a ZUS for the $X$-basis measurement as well.
Hence $|\Phi^+\rangle$ is a common ZUS for these two PVMs.

By contrast,  the mixed state

$$\mathbf{Mix} = \frac{1}{2}(|00\rangle\langle 00| + |11\rangle\langle 11|)$$

is a ZUS for the $Z$-basis measurement,  since the corresponding conditional
operators are $\frac{1}{2}|0\rangle\langle 0|$ and $\frac{1}{2}|1\rangle\langle 1|$.
However,  for the $X$-basis measurement one obtains
\begin{equation}
\text{Tr}_A[(|+\rangle\langle +| \otimes I)\mathbf{Mix}] = \text{Tr}_A[(|-\rangle\langle -| \otimes I)\mathbf{Mix}] = \frac{1}{4} I, 
\end{equation}
so these conditional operators are not orthogonal. Therefore $\mathbf{Mix}$ is not a
ZUS for the $X$-basis measurement.

It follows that this viewpoint also admits a natural operational reformulation in terms of quantum
steering: the zero-uncertainty condition is precisely the requirement that,  once the
measurement setting is announced,  Bob can determine Alice's outcome with zero error
from the corresponding conditional state.

\section{Rigidity theorem in the equal-dimension setting}

Before starting technical details,  we offer some general remarks.
\begin{remark*}
    The relevant task here is not merely to transmit the outcome of one measurement,  but to realize,  on Bob’s side,  a perfectly distinguishable encoding for a family of measurements that generates the whole observable algebra of Alice,  which intuitively requires a single globally consistent correlation structure. In this sense,  mixed states contain too much hidden branching: they may realize perfect transmission for one measurement context,  but cannot in general do so coherently across a generating family. 
    On the other hand, such a zero-error steering task suggests that the shared state must carry enough entanglement to support a perfectly distinguishable remote encoding of Alice’s outcomes. When this requirement extends to a family of measurements generating the full observable algebra,  it is natural to expect that the entanglement is also driven to maximum.
\end{remark*}

\subsection{Rigidity on purity}

\begin{theorem}
Let $H_A$ be a finite-dimensional Hilbert space,  let $\rho$ be a density operator on $H_A\otimes H_B$,  and let $\mathcal K=\{K_1, \dots, K_n\}$ be a family of PVMs. Assume:
\begin{enumerate}
    \item \label{dim}$H_A\simeq H_B$;
    \item $\rho$ is a ZUS for every $K_i\in\mathcal K$;
    \item the $C^*$-algebra generated by $\mathcal K$ is $B(H_A)$.
\end{enumerate}
Then $\rho$ is pure.
\end{theorem}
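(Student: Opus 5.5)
The plan is to encode $\rho$ in the completely positive map $\Lambda(X)=\Tr_A[(X^T\otimes I)\rho]$ from $B(H_A)$ to $B(H_B)$. We then show that the zero-uncertainty conditions force its normalized version $\Phi$ to be a unital $*$-homomorphism on all of $B(H_A)$. Finally we read off from the Choi-type reconstruction of $\rho$ from $\Lambda$ that $\rho$ has rank one.

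\emph{Step 1 (normalization).} Let $Q$ be the support projection of $\rho_B=\Lambda(I)$ and $K=QH_B$. Every $\Lambda(X)$ is supported in $K$: for $X\ge0$ we have $0\le\Lambda(X)\le\|X\|\rho_B$, and general $X$ are linear combinations of positives. So $\Phi(X)=\rho_B^{-1/2}\Lambda(X)\rho_B^{-1/2}$, with the inverse taken on $K$, is a unital completely positive map $B(H_A)\to B(K)$.

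\emph{Step 2 (projections go to projections).} Fix a PVM $K_i=\{P_{i,\alpha}\}_\alpha$. Then $\{P_{i,\alpha}^T\}_\alpha$ is again a PVM, and $\Lambda(P_{i,\alpha}^T)=Z_{i,\alpha}$ by definition. The ZUS hypothesis says the $Z_{i,\alpha}$ have pairwise orthogonal supports. Conjugation by the invertible operator $\rho_B^{-1/2}$ preserves neither orthogonality of supports nor positivity-type relations in general, so this step needs care. I would argue instead that the supports $S_\alpha$ of the $Z_{i,\alpha}$ are mutually orthogonal and sum to $K$, because $\sum_\alpha Z_{i,\alpha}=\rho_B$. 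Hence $\rho_B=\bigoplus_\alpha Z_{i,\alpha}$ is block diagonal with respect to $K=\bigoplus_\alpha S_\alpha$. It follows that $\rho_B^{-1/2}$ is block diagonal as well, and $\Phi(P_{i,\alpha}^T)$ is the projection onto $S_\alpha$. This is the step I expect to be the main obstacle, since the naive argument conjugates supports incorrectly; the block-diagonal decomposition of $\rho_B$ is what resolves it.

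\emph{Step 3 (multiplicative domain).} For each generator $P=P_{i,\alpha}^T$ we now have $\Phi(P^*P)=\Phi(P)=\Phi(P)^*\Phi(P)$. By Choi's multiplicative-domain theorem, $P$ lies in the multiplicative domain of $\Phi$. That domain is a $C^*$-subalgebra, so it contains the $C^*$-algebra generated by the $P_{i,\alpha}^T$. This algebra is the transpose of the algebra generated by $\mathcal K$, namely $B(H_A)^T=B(H_A)$. Therefore $\Phi$ is a unital $*$-homomorphism $B(H_A)\to B(K)$. Since $B(H_A)$ is simple, $\Phi$ is injective, and every representation of $M_d$ is a multiple of the identity representation. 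Hence $\dim K$ is a multiple of $d=\dim H_A$. Combined with $\dim K\le\dim H_B=d$ (assumption \ref{dim}), this forces $K=H_B$, so $\rho_B$ is invertible, and $\Phi(X)=UXU^*$ for a unitary $U:H_A\to H_B$.

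\emph{Step 4 (reconstruction).} With an orthonormal basis $\{\ket{i}\}$ of $H_A$, the blocks of $\rho$ are $\rho_{ij}=\Tr_A[(\ket{j}\bra{i}\otimes I)\rho]=\Lambda(\ket{i}\bra{j})$. Hence
\[
\rho=\sum_{i,j}\ket{i}\bra{j}\otimes \rho_B^{1/2}U\ket{i}\bra{j}U^*\rho_B^{1/2}
=(I\otimes \rho_B^{1/2}U)\Bigl(\sum_{i,j}\ket{i}\bra{j}\otimes\ket{i}\bra{j}\Bigr)(I\otimes \rho_B^{1/2}U)^*.
\]
The middle operator is $d\,\ket{\Omega}\bra{\Omega}$ for the maximally entangled vector $\ket{\Omega}=d^{-1/2}\sum_i\ket{ii}$, so $\rho$ has rank one, i.e.\ $\rho$ is pure. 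This representation also sets up the maximal-entanglement statement: the reduced state on $A$ is $\rho_A=\Tr_B\rho=(U^*\rho_B U)^T$, and purity together with this form of $\rho$ is what one would exploit next.
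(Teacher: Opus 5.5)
Your proposal is correct and follows essentially the same route as the paper. Both arguments use the map $\Lambda$ and its normalization $\Phi$, show that projections are sent to projections, show that the multiplicative domain is all of $B(H_A)$, obtain a unitary implementation $\Phi(X)=UXU^*$, and then reconstruct $\rho$ Choi-style to see that it has rank one.

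The differences are cosmetic. Your block-diagonal decomposition $\rho_B=\bigoplus_\alpha Z_{i,\alpha}$ is the same fact as the paper's identity $\rho_B\Lambda(P)=\Lambda(P)\rho_B=\Lambda(P)^2$. Your multiplicity count $\dim K=dm\le d$ replaces the paper's kernel-ideal, rank--nullity and Skolem--Noether steps, and it makes explicit the use of $\dim H_B=d$ that the paper's surjectivity claim leaves implicit.
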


The ZUS condition is quite similar to $\Tr_A[(X^T\otimes I)\rho]$,  which inspires us to rewrite it in this form to analyze its algebraic structure.

\subsubsection{Preparation}

Define
\begin{equation}
\Lambda:B(H_A)\to B(H_B), \qquad
\Lambda(X):=\Tr_A[(X^T\otimes I)\rho].
\end{equation}
This is a CP map and satisfies
\begin{equation}
\Lambda(I)=\rho_B:=\Tr_A\rho.
\end{equation}
For every spectral projection $P_{i, \alpha}$, 
\begin{equation}
Z_{i, \alpha}
=
\Tr_A[(P_{i, \alpha}\otimes I)\rho]
=
\Lambda(P_{i, \alpha}^T).
\end{equation}
The ZUS condition implies that for each fixed $i$ the family $\{Z_{i, \alpha}\}_\alpha$ has pairwise orthogonal supports and
\begin{equation}
\sum_\alpha Z_{i, \alpha}=\Lambda(I)=\rho_B.
\end{equation}

And for any projector $P$,  we have 
\begin{equation}
    \Lambda(P)\Lambda(1-P)=\Lambda(P)(\rho_B- \Lambda(P))
    =0
    =(\rho_B- \Lambda(P))\Lambda(P)
\end{equation}

hence
\begin{equation}
    \label{com}\rho_B\Lambda(P)=\Lambda(P)\rho_B=\Lambda(P)^2
\end{equation}

Let
\begin{equation}
S:=\supp(\rho_B)\subseteq H_B.
\end{equation}
The operator $\rho_B$ is invertible on its support $S$. Define
\begin{equation}
\Phi:B(H_A)\to B(S), \qquad
\Phi(X):=\rho_B^{-1/2}\Lambda(X)\rho_B^{-1/2}.
\end{equation}
Then
\begin{equation}
\Phi(I)=\rho_B^{-1/2} \rho_B^{}\rho_B^{-1/2} =I_S.
\end{equation}
For every spectral projection $P_{i, \alpha}$,  we have
\begin{equation}
\Phi(P_{i, \alpha}^T)=\rho_B^{-1/2}Z_{i, \alpha}\rho_B^{-1/2}.
\end{equation}
We can find that
\begin{equation}
\Phi(P_{i, \alpha}^T)\ge 0, \qquad
\Phi(P_{i, \alpha}^T)\Phi(P_{i, \beta}^T)=0\ (\alpha\neq\beta), \qquad
\sum_\alpha \Phi(P_{i, \alpha}^T)=I_S, 
\end{equation}
which implies that each $\Phi(P_{i, \alpha}^T)$ is a projection. Thus,  $\Phi$ is a unital CP map.

\subsubsection{Proof that $\Lambda$ has Kraus rank one}

To investigate whether the ZUS condition implies more nontrivial constraints,  we examine the multiplicative domain of $\Phi$. As we shall see,  one can in fact show that this multiplicative domain coincides with the whole algebra. This in turn suggests that $\Phi$ may algebraically be a *~-~isomorphism,  from which it will follow that $\Phi$ has Kraus rank 1.

For a unital CP map $\Phi$,  its multiplicative domain is defined as
\begin{equation}
\MD(\Phi)=
\{X:X\in B(H_A), \Phi(X^*X)=\Phi(X)^*\Phi(X)
\ \text{and}\ 
\Phi(XX^*)=\Phi(X)\Phi(X)^*\}.
\end{equation}

The multiplicative domain is a $C^*$-subalgebra of $B(H_A)$,   we just record the standard argument here.

\begin{proposition}
Let $\Phi:M_d(\mathbb C)\to B(H)$ be a finite-dimensional unital CP map. Then $\MD(\Phi)$ is a unital $C^*$-subalgebra of $M_d(\mathbb C)$.
\end{proposition}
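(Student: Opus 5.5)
The plan is to prove the statement by the standard route: first the Kadison–Schwarz inequality, then Choi's characterization of the multiplicative domain in terms of one-sided multiplicativity.

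\emph{Step 1 (Schwarz inequality).} Stinespring's theorem gives, in finite dimensions, a Hilbert space $K$, a $*$-homomorphism $\pi:M_d(\mathbb C)\to B(K)$ and an isometry $V:H\to K$ with $\Phi(X)=V^*\pi(X)V$. Unitality gives $V^*V=I$. Then
\[
\Phi(X^*X)-\Phi(X)^*\Phi(X)=V^*\pi(X)^*(I-VV^*)\pi(X)V\ge 0 ,
\]
and this equals $B_X^*B_X$ with $B_X:=(I-VV^*)^{1/2}\pi(X)V$. Hence $\Phi(X^*X)=\Phi(X)^*\Phi(X)$ holds exactly when $(I-VV^*)\pi(X)V=0$, since $I-VV^*$ is a projection.

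\emph{Step 2 (Choi's characterization).} The claim is that if $\Phi(X^*X)=\Phi(X)^*\Phi(X)$, then $\Phi(YX)=\Phi(Y)\Phi(X)$ for all $Y$. This follows directly from Step 1:
\[
\Phi(YX)=V^*\pi(Y)\pi(X)V=V^*\pi(Y)VV^*\pi(X)V+V^*\pi(Y)(I-VV^*)\pi(X)V=\Phi(Y)\Phi(X).
\]
Symmetrically, $\Phi(XX^*)=\Phi(X)\Phi(X)^*$ implies $\Phi(XY)=\Phi(X)\Phi(Y)$ for all $Y$, using $(I-VV^*)\pi(X)^*V=0$. One could alternatively use the positivity-of-a-$2\times2$-block-matrix argument, but the Stinespring route is cleanest.

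\emph{Step 3 (closure properties).} By Step 2,
\[
\MD(\Phi)=\{X:\ \Phi(YX)=\Phi(Y)\Phi(X)\ \text{and}\ \Phi(XY)=\Phi(X)\Phi(Y)\ \text{for all } Y\}.
\]
The converse inclusion holds by taking $Y=X^*$ and using $\Phi(X^*)=\Phi(X)^*$, which is true for positive maps. This description is manifestly a linear subspace. It is closed under adjoints because the defining conditions are swapped by $*$. It is closed under products: if $X_1,X_2\in\MD(\Phi)$, then
\[
\Phi(Y X_1X_2)=\Phi(YX_1)\Phi(X_2)=\Phi(Y)\Phi(X_1)\Phi(X_2)=\Phi(Y)\Phi(X_1X_2),
\]
and similarly on the left. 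It contains $I$ by unitality. Norm closedness is automatic in finite dimensions, or follows from continuity of $\Phi$. Hence $\MD(\Phi)$ is a unital $C^*$-subalgebra.

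The only substantive step is Step 2, upgrading the ``diagonal'' equality $\Phi(X^*X)=\Phi(X)^*\Phi(X)$ to full one-sided multiplicativity. It is exactly where the Stinespring positivity, namely that the defect $B_X^*B_X=0$ forces $B_X=0$, is used. The remaining closure checks are routine.
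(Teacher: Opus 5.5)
Your proof is correct, and it starts from the same point as the paper's Appendix A proof. Both use the Stinespring dilation and write the Kadison--Schwarz defect as $V^*\pi(X)^*(I-P)\pi(X)V$ with $P=VV^*$. Hence $X\in\MD(\Phi)$ forces $(I-P)\pi(X)V=0$ and $(I-P)\pi(X)^*V=0$.

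From there the two routes diverge. The paper turns these two identities into the commutation relation $\pi(X)P=P\pi(X)$, and proves the converse. So $\MD(\Phi)=\pi^{-1}(\{P\}')$ is the preimage of a commutant under a $*$-homomorphism, and all the closure properties (linearity, products, adjoints, unit, norm closure) come for free.

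You instead insert $I=P+(I-P)$ between $\pi(Y)$ and $\pi(X)$. This gives Choi's bimodule property: $\Phi(YX)=\Phi(Y)\Phi(X)$ and $\Phi(XY)=\Phi(X)\Phi(Y)$ for all $Y$. You then check the closure properties by hand from that description. Your adjoint check, via $\Phi(X^*Y)=\Phi(Y^*X)^*$, and your product check are both sound.

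The paper's route is more economical for this proposition alone. Yours delivers directly the form that the paper actually needs later. In the proof that $\Lambda$ has Kraus rank one, the paper re-enters the dilation and recomputes $\Phi(ab)=V^*\pi(a)P\pi(b)V=\Phi(a)\Phi(b)$. That is exactly your Step 2, so with your formulation, $\MD(\Phi)=M_d(\mathbb C)$ immediately gives that $\Phi$ is a $*$-homomorphism. Your characterization also no longer refers to the dilation once it is established.
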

\begin{proof}
The standard Stinespring-dilation proof is given in Appendix~\eqref{app:md}.
\end{proof}

\begin{lemma}
    $\MD(\Phi)=M_d(\mathbb C)$
\end{lemma}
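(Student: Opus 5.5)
The plan is to show that $\MD(\Phi)$ contains every transposed spectral projection $P_{i,\alpha}^T$. The preceding Proposition then says $\MD(\Phi)$ is a unital $C^*$-subalgebra, so it contains the $C^*$-algebra generated by these projections. Finally we use the generation hypothesis (3) of the Theorem, transported through the transpose, to conclude that this algebra is all of $M_d(\mathbb C)$.

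\textbf{Step 1: each $P_{i,\alpha}^T$ lies in $\MD(\Phi)$.} Fix $P=P_{i,\alpha}^T$. This is again an orthogonal projection, since transposition with respect to the fixed basis preserves $P^*=P=P^2$. The Preparation subsection showed that $\Phi(P)$ is a projection. Since $P^*P=PP^*=P^2=P$, we get
\[
\Phi(P^*P)=\Phi(P)=\Phi(P)^2=\Phi(P)^*\Phi(P),
\]
and likewise $\Phi(PP^*)=\Phi(P)\Phi(P)^*$. By the definition of the multiplicative domain, $P\in\MD(\Phi)$.

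\textbf{Step 2: closure under the algebra operations.} By the Proposition, $\MD(\Phi)$ is a unital $C^*$-subalgebra of $M_d(\mathbb C)$. It therefore contains $C^*(\{P_{i,\alpha}^T\}_{i,\alpha})$, the $C^*$-algebra generated by the transposed projections.

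\textbf{Step 3: the transposed projections generate everything.} The map $X\mapsto X^T$ is a linear, bijective anti-automorphism of $M_d(\mathbb C)$ satisfying $(XY)^T=Y^TX^T$ and $(X^*)^T=(X^T)^*$. Let $\mathcal B$ be the $*$-algebra generated by the $P_{i,\alpha}$. The transpose maps words in the $P_{i,\alpha}$ to the reversed words in the $P_{i,\alpha}^T$, so it sends $\mathcal B$ onto the $*$-algebra generated by the $P_{i,\alpha}^T$. In finite dimensions no closure is needed. By hypothesis (3), $\mathcal B=B(H_A)=M_d(\mathbb C)$, and hence the transposed generators also generate $M_d(\mathbb C)$.

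Combining the three steps gives $\MD(\Phi)=M_d(\mathbb C)$.

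I expect the only points needing care to be the following.
\begin{enumerate}
\item Hypothesis (3) is phrased for the generated $C^*$-algebra, which is defined in terms of the spectral projections of the PVMs. Step 3 has to be stated so that it uses exactly those projections as generators.
\item In Step 1 we need $\Phi(P)$ to be a genuine projection on $S$, not just positive with orthogonal supports. This was secured earlier: $\Phi(P)\Phi(P')=0$ for distinct outcomes together with $\sum\Phi=I_S$ gives $\Phi(P)^2=\Phi(P)\bigl(I_S-\sum_{\beta\neq\alpha}\Phi(P_{i,\beta}^T)\bigr)=\Phi(P)$.
\end{enumerate}
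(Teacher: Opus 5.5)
Your proposal is correct and follows the paper's own proof essentially step for step. Each $P_{i,\alpha}^T$ lies in $\MD(\Phi)$ because both it and $\Phi(P_{i,\alpha}^T)$ are projections; $\MD(\Phi)$, being a $C^*$-subalgebra, then contains $C^*(\{P_{i,\alpha}^T\})$; and the transpose anti-automorphism carries a generating family of $M_d(\mathbb C)$ to a generating family. Your word-reversal argument in Step 3 and your check that $\Phi(P_{i,\alpha}^T)^2=\Phi(P_{i,\alpha}^T)$ only spell out details the paper states without elaboration.
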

\begin{proof}
We have proved that both $P_{i, \alpha}^T$ and $\Phi(P_{i, \alpha}^T)$ are projections,  which gives us

\begin{equation}\Phi(P_{i, \alpha}^T)\Phi(P_{i, \alpha}^T)^*=\Phi(P_{i, \alpha}^T)=\Phi(P_{i, \alpha}^T(P_{i, \alpha}^T)^*)
\end{equation}

Thus,  by definition of multiplicative domain,  we have $P_{i, \alpha}^T\in\MD(\Phi)$.

Since $\MD(\Phi)$ is a $C^*$-subalgebra of $B(H_A)$,  it must contain the $C^*$-algebra generated by $\{P_{i, \alpha}^T\}$,  which means:
\begin{equation}
    C^*(\{P_{i, \alpha}^T\})\subseteq \MD(\Phi).
\end{equation}
On the other hand,  since the family $\mathcal K=\{K_i\}$ generates $B(H_A)$,  and the transpose map is an anti-automorphism of $B(H_A)$,  the transposed spectral projections $\{P_{i, \alpha}^T\}$ also generate the full matrix algebra. Hence
\begin{equation}
C^*(\{P_{i, \alpha}^T\})=B(H_A).
\end{equation}
Therefore
\begin{equation}
B(H_A)\subseteq \MD(\Phi).
\end{equation}
And by definition,  $\MD(\Phi)\subseteq B(H_A)$. Hence
\begin{equation}
\MD(\Phi)=B(H_A).
\end{equation}
\end{proof}

\begin{lemma}
    $\Lambda$ has Kraus rank 1
\end{lemma}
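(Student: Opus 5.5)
Since $\MD(\Phi)=B(H_A)$ by the preceding lemma, $\Phi:B(H_A)\to B(S)$ is a unital $*$-homomorphism. We show it has Kraus rank one, and then transfer this to $\Lambda$ through the invertible congruence $\rho_B^{1/2}(\cdot)\rho_B^{1/2}$ on $S$.

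\textbf{Structure of $\Phi$.} Because $B(H_A)\simeq M_d(\mathbb C)$ is simple, every unital $*$-representation of it on $S$ is unitarily equivalent to an amplification. So there is a unitary $U:S\to H_A\otimes\mathbb C^m$ with
\begin{equation}
\Phi(X)=U^*(X\otimes I_m)U ,
\end{equation}
and in particular $\dim S=dm$. Hypothesis~\ref{dim} gives $\dim S\le\dim H_B=d$. Since $\Phi$ is unital and $d\ge1$, we have $S\neq 0$, so $m\ge1$, which forces $m=1$ and $S=H_B$.

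\textbf{Kraus rank of $\Phi$.} With $m=1$, $U:H_A\to H_B$ is a unitary and $\Phi(X)=U^*XU$. If $\Phi(X)=\sum_k A_k^*XA_k$ is any Kraus decomposition, then the Choi matrix of $\Phi$ is the rank-one operator $|\mathrm{vec}(U)\rangle\langle \mathrm{vec}(U)|$. Its rank equals the minimal number of Kraus operators, so every $A_k$ is proportional to $U$. This is the standard fact that a Choi matrix of rank one has a single Kraus operator.

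\textbf{Transfer to $\Lambda$.} Since $S=H_B$, $\rho_B$ is invertible and
\begin{equation}
\Lambda(X)=\rho_B^{1/2}\Phi(X)\rho_B^{1/2}=(U\rho_B^{1/2})^*X(U\rho_B^{1/2}).
\end{equation}
Thus $\Lambda$ is implemented by the single Kraus operator $V=U\rho_B^{1/2}$ and has Kraus rank one.

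For the purity theorem one then notes that $\Lambda$ is essentially the Choi/Jamiołkowski image of $\rho$: $\Lambda(X)=\Tr_A[(X^T\otimes I)\rho]$ is linear in $\rho$, and Kraus rank one means the corresponding Choi operator, which is $\rho$ up to a fixed partial transpose and reordering, has rank one. We should double-check this identification, but it belongs to the next step rather than to this lemma.

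\textbf{Main obstacle.} The delicate step is the structure step. We must make sure that the multiplicativity provided by the multiplicative domain really gives a unital $*$-homomorphism into $B(S)$. Because $\Phi$ is self-adjoint-preserving, it does map onto $B(S)$ with $\Phi(I)=I_S$. We must also make sure the dimension count uses $\dim S\le\dim H_B=\dim H_A$ correctly. This is exactly where hypothesis~\ref{dim} enters, and without it multiplicity $m>1$ would break rank one.
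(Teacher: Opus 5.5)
Your proof is correct and follows the paper's route. In both, $\MD(\Phi)=B(H_A)$ makes $\Phi$ a unital $*$-homomorphism, this homomorphism is identified with a unitary conjugation, and the conclusion is transferred to $\Lambda$ through $\rho_B^{1/2}$. The one difference is in the middle step. You use the amplification form $X\otimes I_m$ of representations of $M_d(\mathbb C)$ together with the count $dm=\dim S\le d$, while the paper uses simplicity to get injectivity, then rank--nullity, then Skolem--Noether. Your version has the small advantage of stating outright that $S=H_B$, so $\rho_B$ is invertible; the paper's surjectivity claim relies on $\dim S=d$ without saying so.
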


\begin{proof}

Since we have proved $\mathrm{MD}(\Phi) = M_d(\mathbb{C})$
, the properties of the multiplicative domain (as detailed Eq.\eqref{eq.P可以加} of Appendix A)$$\pi(a)V = P\pi(a)V,  \qquad \pi(a)^*V = P\pi(a)^*V$$actually holds for all $\forall a \in M_d(\mathbb{C})$.

Consequently,  we find that $\Phi$ is a $*$-homomorphism on $M_d$:\begin{equation}\Phi(ab) = V^* \pi(a)  \pi(b) V = V^* \pi(a) P \pi(b) V = V^* \pi(a) V V^* \pi(b) V = \Phi(a)\Phi(b)\end{equation}

Since $\Phi$ is a *-homomorphism,  for any $A\in \ker(\Phi)$ and any $X\in M_d(\mathbb C)$,  we have
\begin{equation}
    \Phi(XA)=\Phi(X)\Phi(A)=\Phi(X)\cdot 0=0=\Phi(AX)
\end{equation}

which implies $XA, AX\in \ker(\Phi)\subseteq M_d(\mathbb C)$ for all $X\in M_d(\mathbb C)$,  so $\ker(\Phi)$ is a two-sided ideal of $M_d(\mathbb C)$.

Since $M_d(\mathbb{C})$ possesses two trivial ideals and evidently $\ker(\Phi) \neq M_d(\mathbb{C})$),  it follows that $\ker(\Phi) = \{0\}$,  which implies that $\Phi$ is injective. Furthermore,  by the Rank-Nullity Theorem: $\dim(M_d) = \dim(\ker \Phi) + \dim(\text{Ran}(\Phi))$, we have $\Phi$ is also surjective.

Therefore,  we conclude that $\Phi$ is a *-algebraic isomorphism:

\begin{equation}
M_d(\mathbb C) \overset{*-algebraic}{\simeq} B(S)
\end{equation}

For algebraic isomorphisms between central simple algebras,  the Skolem--Noether theorem\cite{Davidson1996} guarantees the existence of a unitary operator $U: H_A \to S$ such that:

\begin{equation}
\Phi(X) = UXU^*
\end{equation}

Furthermore, 

\begin{equation}
\Lambda(X) = (\rho_B^{1/2}U) X (\rho_B^{1/2}U)^*
\end{equation}

which means the Kraus rank of $\Lambda$ is 1.
\end{proof}

Then,  we will use $\text{Kraus Rank of a CP Map}=\text{Rank of Corresponding Choi--Jamio{\l}kowski Operator}$ to prove the purity.

\subsubsection{Purity from the Choi--Jamio{\l}kowski operator}

We prove that the Choi--Jamio{\l}kowski operator of $\Lambda$ is unitarily equivalent to $\rho$.

\begin{proof}
Fix an orthonormal basis $\{\ket{i}\}_{i=1}^d$ of $H_A$,  let
\begin{equation}
E_{ij}:=\ket{i}\bra{j}, 
\end{equation}
and the double-ket identity is
\begin{equation}
\ket{I}\!\rangle:=\sum_{i=1}^d \ket{i}\otimes \ket{i}.
\end{equation}
The Choi--Jamio{\l}kowski operator of $\Lambda$ is
\begin{equation}
R_\Lambda:=(\Lambda\otimes I)\bigl(\ket{I}\!\rangle\!\langle\!\bra{I}\bigr).
\end{equation}
Since
\[
\ket{I}\!\rangle\!\langle\!\bra{I}
=
\sum_{i, j} E_{ij}\otimes E_{ij}, 
\]
we get
\begin{equation}
R_\Lambda=(\Lambda \otimes I)(\sum_{i, j}E_{ij}\otimes E_{ij})=\sum_{i, j}\Lambda(E_{ij})\otimes E_{ij}.
\end{equation}

On the other hand,  we have
\begin{equation}
\rho=\sum_{i, j}E_{ij}\otimes \rho_{ij}, 
\qquad
\rho_{ij}:=\Tr_A[(E_{ji}\otimes I)\rho].
\end{equation}
By the definition of $\Lambda$, 
\begin{equation}
\Lambda(E_{ij})
=
\Tr_A[(E_{ij}^T\otimes I)\rho]
=
\Tr_A[(E_{ji}\otimes I)\rho]
=
\rho_{ij}.
\end{equation}
Hence
\begin{equation}
R_\Lambda=\sum_{i, j}\rho_{ij}\otimes E_{ij}.
\end{equation}

Let
\begin{equation}
F:H_A\otimes H_B\to H_B\otimes H_A, 
\qquad
F(\ket{x}\otimes\ket{y})=\ket{y}\otimes\ket{x}
\end{equation}
be the swap unitary. Then
\[
F(A\otimes B)F^*=B\otimes A
\]
 and therefore
\begin{equation}
R_\Lambda=F\rho F^*.
\end{equation}

Since $\Lambda$ has Kraus rank one,  its Choi--Jamio{\l}kowski operator has rank one:
\begin{equation}
R_\Lambda=\ket{K}\!\rangle\!\langle\!\bra{K}
\end{equation}
Because $F$ is unitary,  we have
\begin{equation}
1=\operatorname{Kraus \;rank}(\Lambda)=\text{rank}(R_\Lambda)=\operatorname{rank}(F\rho F^*)=\text{rank}(\rho)=1.
\end{equation}
Therefore $\rho$ is pure.
\qedhere
\end{proof}

\subsection{Rigidity on entanglement}

\begin{theorem}
Let $H_A$ be finite-dimensional and let $\rho$ be a density operator on $H_A\otimes H_B$. Let $\mathcal K=\{K_1, \dots, K_n\}$ be a family of PVMs such that:
\begin{enumerate}
    \item $H_A\simeq H_B$;
    \item $\rho$ is a common ZUS of $\mathcal K$,  i.e. ZUS for every $K_i\in\mathcal K$;
    \item the $C^*$-algebra generated by $\mathcal K$ is $B(H_A)$;
    \item $\rho$ is pure.
\end{enumerate}
Then $\rho$ is maximally entangled.
\end{theorem}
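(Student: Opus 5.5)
We reuse the machinery from the proof of purity. By the Kraus-rank-one lemma, under hypotheses 1--3 the unital map $\Phi$ is a $*$-isomorphism $B(H_A)\to B(S)$, implemented by a unitary $U:H_A\to S$, so that
\begin{equation*}
\Lambda(X)=(\rho_B^{1/2}U)\,X\,(\rho_B^{1/2}U)^* .
\end{equation*}
Two consequences follow directly. First, $\dim S=\dim H_A=\dim H_B$, so $S=H_B$ and $\rho_B$ is invertible (full rank). Second, equation \eqref{com} gives $\rho_B\Lambda(P)=\Lambda(P)\rho_B$ for every projection $P$ in the generating family. The plan is to upgrade this second fact to: $\rho_B$ commutes with all of $\Lambda(B(H_A))$. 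That forces $\rho_B$ to be a scalar, i.e. $\rho_B=I/d$, which for a pure state is exactly maximal entanglement.

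\textbf{Key step.} We show that $\rho_B$ commutes with $\Phi(X)$ for all $X$. From \eqref{com} and the invertibility of $\rho_B$, we have $\rho_B^{1/2}\Phi(P)\rho_B^{1/2}=\Lambda(P)$ for $P=P_{i,\alpha}^T$, and this commutes with $\rho_B$. Since $\Lambda(P)$ is then a function-compatible operator commuting with $\rho_B$, it commutes with $\rho_B^{\pm1/2}$ as well. Hence $\Phi(P)=\rho_B^{-1/2}\Lambda(P)\rho_B^{-1/2}$ commutes with $\rho_B$. Alternatively, one can read off directly from \eqref{com} that $\rho_B^{-1}\Lambda(P)=\Lambda(P)\rho_B^{-1}$ and conclude the same way.

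The commutant of $\rho_B$ is a $C^*$-algebra. Since $\Phi$ is a $*$-homomorphism, it contains $\Phi\bigl(C^*(\{P_{i,\alpha}^T\})\bigr)=\Phi(B(H_A))=B(H_B)$, the last equality by surjectivity. So $\rho_B$ lies in the center of $B(H_B)$, i.e. $\rho_B=cI$. Taking the trace gives $c=1/d$.

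\textbf{Conclusion.} Write $\rho=|\psi\rangle\langle\psi|$ with Schmidt decomposition $|\psi\rangle=\sum_k\sqrt{\lambda_k}\,|a_k\rangle|b_k\rangle$. Then $\rho_B=\sum_k\lambda_k|b_k\rangle\langle b_k|=I/d$ forces all $d$ Schmidt coefficients to equal $1/d$, so $\rho$ is maximally entangled. Equivalently, the Kraus operator $\rho_B^{1/2}U=U/\sqrt d$ is proportional to a unitary, so the rank-one Choi operator is a maximally entangled vector and $\rho=F^*R_\Lambda F$ inherits this.

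\textbf{Main obstacle.} The delicate point is transferring commutation from $\Lambda(P)$ to $\Phi(P)$. The intended route is that $\Lambda(P)$ commutes with $\rho_B$, so each eigenspace of $\rho_B$ is invariant under $\Lambda(P)$, and therefore $\Lambda(P)$ commutes with every function of $\rho_B$. This must hold on $S$, which here is all of $H_B$. The argument also needs the earlier lemma to genuinely give surjectivity onto $B(S)$ with $S=H_B$; this is where hypothesis 1 is essential.
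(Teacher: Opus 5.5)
Your proof is correct, but it takes a different route from the paper's. You work on Bob's side, while the paper's proof works entirely on Alice's side and never uses the Kraus-rank-one lemma.

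The paper fixes one PVM and uses the support projections $Q_{i,\alpha}$ of the $Z_{i,\alpha}$ to get $(P_{i,\alpha}\otimes I)\rho=(P_{i,\alpha}\otimes Q_{i,\alpha})\rho$. From this it deduces $P_{i,\alpha}\rho_A P_{i,\beta}=0$ for $\alpha\neq\beta$. Hence $\rho_A$ commutes with every $P_{i,\alpha}$, so $\rho_A\in B(H_A)'=\mathbb C I_A$. Purity then converts $\rho_A=I_A/d$ into equal Schmidt coefficients. That argument is elementary: it needs no Stinespring dilation, multiplicative domain or Skolem--Noether. Its intermediate conclusion $\rho_A=I_A/d$ also holds for any common ZUS of a generating family, with no assumption on $\dim H_B$ and no purity assumption.

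Your route relies on the $*$-isomorphism $\Phi:B(H_A)\to B(H_B)$. That step uses hypothesis~1 essentially, to get $S=H_B$ and surjectivity. You then push \eqref{com} through $\Phi$ to land in $B(H_B)'$. You rightly apply \eqref{com} only to the projections $P_{i,\alpha}^T$, which is where it actually holds.

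In return, your route yields the explicit form $\Lambda(X)=\tfrac1d\,UXU^*$. This identifies $\rho$ outright as the swapped Choi state of a unitary channel, so purity and maximal entanglement come out of a single argument and hypothesis~4 becomes redundant. Your commutation step is also exactly the trivial-multiplicity case ($K\simeq\mathbb C$) of the commutant argument the paper uses later, in Step~2 of the normal-form theorem and in the larger-memory proposition. There it produces $U\rho_B U^*=\tfrac1d I_d\otimes\sigma$. So your approach connects more directly to the paper's general classification, whereas the paper's Alice-side proof gives a statement that survives arbitrary memory dimension with minimal machinery.
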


Note that by the previous theorem,  assumption (4) is in fact the result of (1)-(3),  but it is convenient to keep it explicit here.

\begin{lemma}
    Let
\[
\rho_A:=\Tr_B(\rho), \qquad \rho_B:=\Tr_A(\rho), \qquad d:=\dim H_A.
\]
Then we have
\begin{equation}
\rho_A=\frac{1}{d}I_A
\end{equation}
\end{lemma}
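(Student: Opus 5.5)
We will derive $\rho_A=\frac1d I_A$ from the structure obtained in the proof of the purity theorem. There we showed that, under assumptions (1)--(3), $\Lambda(X)=\Tr_A[(X^T\otimes I)\rho]$ has the form $\Lambda(X)=(\rho_B^{1/2}U)X(\rho_B^{1/2}U)^*$, where $U:H_A\to S$ is unitary and $S=\supp(\rho_B)$. Since $U$ is onto $S$ and $\dim S\le\dim H_B=d$, we get $S=H_B$, so $\rho_B$ is invertible. The plan is to compute $\rho_A$ in two ways: once through $\Lambda$ and once through the purity of $\rho$.

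\textbf{Step 1: express $\rho_A$ through $\Lambda$.} For any $X\in B(H_A)$,
\[
\Tr(X\rho_A)=\Tr[(X\otimes I)\rho]=\Tr\Lambda(X^T).
\]
By the Kraus-rank-one form and cyclicity of the trace,
\[
\Tr\Lambda(X^T)=\Tr\bigl(U^*\rho_B U X^T\bigr)=\Tr\bigl((U^*\rho_B U)^T X\bigr).
\]
Hence
\[
\rho_A=(U^*\rho_BU)^T.
\]
In particular $\rho_A$ and $\rho_B$ have the same spectrum. This is consistent with purity, but it is not yet enough to conclude.

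\textbf{Step 2: get an extra constraint from the projection property.} I expect this to be the main step. By Eq.~\eqref{com}, each $\Lambda(P)$ with $P=P_{i,\alpha}^T$ commutes with $\rho_B$. Writing $\Lambda(P)=\rho_B^{1/2}\Phi(P)\rho_B^{1/2}$ with $\Phi(P)=UPU^*$ a projection, $\rho_B$ commutes with $\rho_B^{1/2}UPU^*\rho_B^{1/2}$. Since $\rho_B$ is invertible, conjugating by $\rho_B^{-1/2}$ shows that $\rho_B$ commutes with $UPU^*$. Therefore $U^*\rho_BU$ commutes with every $P_{i,\alpha}^T$, and hence with the algebra they generate, which is all of $B(H_A)$ (as shown in the lemma $\MD(\Phi)=M_d(\mathbb C)$).

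\textbf{Step 3: conclude with the trace.} By Step 2, $U^*\rho_BU$ lies in the commutant of $B(H_A)$, so $U^*\rho_BU=cI$. By Step 1, $\rho_A=cI$ as well. Finally, $\Tr\rho_A=1$ gives $c=1/d$. As a by-product, $\rho_B=\frac1d I_B$. Assumption (4) is not strictly needed for this lemma, but it will combine with the lemma in the theorem to give maximal entanglement through the Schmidt decomposition.

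The only delicate point is Step 2, in checking that the commutation relation Eq.~\eqref{com} transfers from $\Lambda(P)$ to $\Phi(P)$; this uses the invertibility of $\rho_B$ on $S=H_B$. If that transfer turns out to need more care, the fallback is to argue directly: $\rho_B$ commutes with $\Lambda(P)$ and with $\Lambda(P)^2=\rho_B\Lambda(P)$, and then use functional calculus.
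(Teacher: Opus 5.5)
Your proof is correct, but it takes a different route from the paper's.

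\textbf{Your route.} You work on Bob's side and build on the Kraus-rank-one lemma. From $\Phi(X)=UXU^*$ with $U:H_A\to S$ unitary you get $S=H_B$ and $\rho_A=(U^*\rho_BU)^T$. You then use Eq.~\eqref{com}, after stripping off the invertible $\rho_B^{1/2}$ factors, to show that $\rho_B$ commutes with $\Phi(P_{i,\alpha}^T)=UP_{i,\alpha}^TU^*$. These generate $\Phi(B(H_A))=B(H_B)$, so $\rho_B$ is scalar, and therefore $\rho_A$ is scalar. In effect this is the one-block, multiplicity-one case of the paper's later normal form, $U\rho_BU^*=\frac1d I_d\otimes\sigma$ with $\dim K=1$. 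It gives $\rho_B=\frac1d I_B$ as a by-product and makes the spectral link between $\rho_A$ and $\rho_B$ explicit.

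\textbf{The paper's route.} The paper argues directly on Alice's side, one PVM at a time. It takes $Q_{i,\alpha}$ to be the support projection of $Z_{i,\alpha}$. From $\Tr[(P_{i,\alpha}\otimes(I_B-Q_{i,\alpha}))\rho]=0$ and positivity it gets $(P_{i,\alpha}\otimes(I_B-Q_{i,\alpha}))\rho=0$. Since $Q_{i,\alpha}Q_{i,\beta}=0$, it follows that $P_{i,\alpha}\rho_AP_{i,\beta}=0$ for $\alpha\neq\beta$. Hence $\rho_A$ commutes with every $P_{i,\alpha}$ and lies in $B(H_A)'=\mathbb C I_A$.

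\textbf{What each buys.} The paper's argument is more elementary and more general. It uses neither purity, nor the Stinespring/Skolem--Noether machinery, nor $H_A\simeq H_B$. It in fact shows $\rho_A\in\mathcal A'$ for any $\mathcal A$-ZUS with arbitrary memory dimension, and the generation hypothesis enters only at the last step. Your argument needs the full-generation and equal-dimension hypotheses from the start, since both feed the Kraus-rank-one lemma. In exchange, it ties the lemma to the rigidity structure already established and determines $\rho_B$ as well.
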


\begin{proof}
Fix $K_i=\{P_{i, \alpha}\}$ and 
$Z_{i, \alpha}:=\Tr_A[(P_{i, \alpha}\otimes I_B)\rho]$.Since $\rho$ is a ZUS for $K_i$,  the operators $Z_{i, \alpha}$ have pairwise orthogonal supports. Let $Q_{i, \alpha}$ denote the support projection of $Z_{i, \alpha}$. Then
\begin{equation}
Q_{i, \alpha}Q_{i, \beta}=0\qquad (\alpha\neq\beta), 
\end{equation}
and
\begin{equation}
Z_{i, \alpha}=Q_{i, \alpha}Z_{i, \alpha}=Z_{i, \alpha}Q_{i, \alpha}.
\end{equation}

Using
\begin{equation}
\Tr[X\, \Tr_A(Y)]=\Tr[(I_A\otimes X)Y], 
\end{equation}
and set
\begin{equation}
X=I_B-Q_{i, \alpha}, \quad Y=P_{i, \alpha}\otimes I_B
\end{equation}
we obtain
\begin{equation}
    0
=
\Tr[(I_B-Q_{i, \alpha})Z_{i, \alpha}] =
\Tr[(P_{i, \alpha}\otimes (I_B-Q_{i, \alpha}))\rho].
\end{equation}

Since $P_{i, \alpha}\otimes (I_B-Q_{i, \alpha})\ge 0$ and $\rho\ge 0$,  the equality $\Tr(P_{i, \alpha}\otimes (I_B-Q_{i, \alpha})\rho)=0$ implies
\begin{equation}
(P_{i, \alpha}\otimes (I_B-Q_{i, \alpha}))\rho=0=
\rho(P_{i, \alpha}\otimes (I_B-Q_{i, \alpha}))
\end{equation}
Hence
\begin{equation}\label{eq:compression}
(P_{i, \alpha}\otimes I_B)\rho=(P_{i, \alpha}\otimes Q_{i, \alpha})\rho, 
\qquad
\rho(P_{i, \alpha}\otimes I_B)=\rho(P_{i, \alpha}\otimes Q_{i, \alpha}).
\end{equation}

For $\alpha\neq\beta$,  define
\begin{equation}
T_{\alpha\beta}:=(P_{i, \alpha}\otimes I_B)\rho(P_{i, \beta}\otimes I_B).
\end{equation}
Using \eqref{eq:compression}, 
\begin{equation}
T_{\alpha\beta}
=
(I_A\otimes Q_{i, \alpha})\, T_{\alpha\beta}\, (I_A\otimes Q_{i, \beta}).
\end{equation}
Since $Q_{i, \alpha}Q_{i, \beta}=0$,  it follows that
\begin{equation}
\Tr_B(T_{\alpha\beta})=0.
\end{equation}
Therefore
\begin{equation}
P_{i, \alpha}\rho_A P_{i, \beta}
=
\Tr_B\!\big[(P_{i, \alpha}\otimes I_B)\rho(P_{i, \beta}\otimes I_B)\big]
=
0
\qquad (\alpha\neq\beta).
\end{equation}

additionally,  by $\sum_\alpha P_{i, \alpha}=I_A$,  we have
\begin{equation}
\rho_A=\sum_{\alpha, \beta}P_{i, \alpha}\rho_A P_{i, \beta}
=\sum_\alpha P_{i, \alpha}\rho_A P_{i, \alpha}.
\end{equation}
Hence for any $\alpha$,  we have
\begin{equation}
P_{i, \alpha}\rho_A
=
\sum_\gamma P_{i, \alpha}\rho_A P_{i, \gamma}
=
P_{i, \alpha}\rho_A P_{i, \alpha}
=
\sum_\gamma P_{i, \gamma}\rho_A P_{i, \alpha}
=
\rho_A P_{i, \alpha}, 
\end{equation}

Thus
\begin{equation}
\rho_A\in C^*(\{P_{i, \alpha}\})'=B(H_A)'=\mathbb C I_A.
\end{equation}
Since $\Tr(\rho_A)=1$,  we conclude
\begin{equation}
\rho_A=\frac{1}{d}I_A.
\end{equation}
\end{proof}
Finally,  since we have proved $\rho_A=\frac{1}{d}I_A$ and $\rho$ is pure,  we write $\rho=\ket{\psi}\bra{\psi}$ and take a Schmidt decomposition
\begin{equation}
\ket{\psi}=\sum_{j=1}^d \sqrt{s_j}\, \ket{j}_A\otimes \ket{j}_B, 
\qquad
s_j\ge 0, \quad \sum_j s_j=1.
\end{equation}
Then
\begin{equation}
\rho_A=\sum_{j=1}^d s_j \ket{j}\bra{j}.
\end{equation}
Comparing with $\rho_A=\frac1d I_A$,  we obtain
\begin{equation}
s_j=\frac1d\qquad \forall\,  j.
\end{equation}
Hence all Schmidt coefficients are equal and $\rho$ is maximally entangled.

\section{Proper subalgebras always admit non-maximally entangled pure ZUS}
The physical intuition here is that a proper subalgebra does not test the whole observable structure of Alice's system. 
And the missing observables imply hidden degrees of freedom that are never tested by the ZUS condition. Non-maximal entanglement can therefore be stored entirely in those invisible sectors without affecting zero-uncertainty distinguishability for any PVM in the algebra.

\begin{proposition}
Let
\begin{equation}
\mathcal A:=C^*(\{K_i\})\subsetneq B(H_A)
\end{equation}
be a finite-dimensional proper unital $*$-subalgebra and assume $H_B\simeq H_A$. Then there exists a pure state
\begin{equation}
\rho=\ket{\psi}\bra{\psi}
\end{equation}
on $H_A\otimes H_B$ such that,  for every PVM $\{E_\alpha\}\subset \mathcal A$,  the state $\rho$ is a ZUS for that PVM; moreover,  $\rho$ is not globally maximally entangled.
\end{proposition}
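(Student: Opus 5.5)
We use the structure theorem for finite-dimensional $*$-algebras. After a unitary change of basis on $H_A$ we can write
\[
H_A\simeq\bigoplus_{k=1}^m \mathbb C^{n_k}\otimes\mathbb C^{m_k},\qquad
\mathcal A\simeq\bigoplus_{k=1}^m M_{n_k}(\mathbb C)\otimes I_{m_k}.
\]
Since $\mathcal A\subsetneq B(H_A)$, at least one of the following holds: $m\ge 2$, or some multiplicity $m_k\ge 2$. Every projection in $\mathcal A$ has the form $E=\bigoplus_k E_k\otimes I_{m_k}$, with each $E_k$ a projection in $M_{n_k}$.

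The candidate state is
\[
\ket{\psi}=\sum_k \sqrt{p_k}\,\ket{\Omega_k}\otimes\ket{\phi_k}.
\]
Here $\ket{\Omega_k}=n_k^{-1/2}\sum_{j}\ket{j}\ket{j}\in\mathbb C^{n_k}\otimes\mathbb C^{n_k}$ is maximally entangled between the $k$-th ``matrix'' factor on $A$ and a copy on $B$. The vector $\ket{\phi_k}\in\mathbb C^{m_k}\otimes\mathbb C^{m_k}$ is any state of the multiplicity factors. We take $p_k>0$ with $\sum_k p_k=1$. We realise $H_B$ as the same direct sum, using $H_B\simeq H_A$, so that the $k$-th summand on $A$ is paired with the $k$-th summand on $B$. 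This pairing puts $\ket\psi$ in $H_A\otimes H_B$.

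\textbf{ZUS check.} Fix a PVM $\{E_\alpha\}\subset\mathcal A$ and write $E_\alpha=\bigoplus_k E_{\alpha,k}\otimes I_{m_k}$. The cross terms between different blocks $k\neq l$ vanish under $\Tr_A$, because $E_\alpha$ is block diagonal and the $A$-parts of different summands are orthogonal. Hence
\[
Z_\alpha=\sum_k p_k\,\frac{1}{n_k}\,E_{\alpha,k}^T\otimes\sigma_k,\qquad \sigma_k:=\Tr_{A\text{-mult}}\ket{\phi_k}\bra{\phi_k},
\]
where we use the transpose trick $\Tr_1[(X\otimes I)\ket{\Omega}\bra{\Omega}]=X^T/n$. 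Each term sits in the $k$-th block of $H_B$. For fixed $k$ and $\alpha\neq\beta$ we have $E_{\alpha,k}^TE_{\beta,k}^T=(E_{\beta,k}E_{\alpha,k})^T=0$. Therefore $Z_\alpha Z_\beta=0$, and since these operators are positive their supports are orthogonal. So $\rho$ is a ZUS for every PVM in $\mathcal A$.

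\textbf{Non-maximal entanglement.} The reduced state on $A$ is
\[
\rho_A=\bigoplus_k \frac{p_k}{n_k}\,I_{n_k}\otimes\tau_k,\qquad \tau_k:=\Tr_{B\text{-mult}}\ket{\phi_k}\bra{\phi_k}.
\]
Maximal entanglement would require $\rho_A=I/d$.
\begin{itemize}
\item If some $m_k\ge2$, choose $\ket{\phi_k}$ to be a product state. Then $\tau_k$ is rank one on $\mathbb C^{m_k}$, so $\rho_A$ is not full rank.
\item If all $m_k=1$, then $m\ge2$. Choose weights $p_k\neq n_k/d$ for some $k$. Since the $p_k$ are free apart from positivity and normalisation, this is always possible.
\end{itemize}
In either case $\rho_A\neq I/d$, so $\rho$ is not globally maximally entangled.

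The main obstacle is the bookkeeping in the first step. We must embed the blockwise pairing into $H_B\simeq H_A$ correctly, which works because the total dimension of the $B$-summands equals $\sum_k n_km_k=d$. We must also confirm that the off-diagonal blocks of $\ket\psi\bra\psi$ contribute nothing to $Z_\alpha$. The orthogonality check itself is routine once the structure theorem is in place.
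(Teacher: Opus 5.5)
Your proof is correct and uses the same mechanism as the paper: the Artin--Wedderburn form of $\mathcal A$, a maximally entangled vector on each matrix factor $\mathbb C^{n_k}$, the transpose identity $\Tr_1[(X\otimes I)\ket{\Omega}\bra{\Omega}]=X^T/n$, and the fact that the multiplicity factors are invisible to $\mathcal A$. The paper's construction is the special case in which all the weight sits on a single block $a_0$, with a product vector $\ket{u}\otimes\ket{v}$ on the multiplicity factor; this spares it your cross-term check, while your weighted superposition with all $p_k>0$ also yields non-maximally entangled ZUS of full Schmidt rank when all $m_k=1$.
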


\begin{proof}
By the finite-dimensional Artin--Wedderburn decomposition theorem\cite{Davidson1996},  there exists a unitary identification such that
\begin{equation}
H_A \simeq \bigoplus_{a=1}^r (\mathbb C^{n_a}\otimes \mathbb C^{m_a}), 
\end{equation}
and
\begin{equation}
\mathcal A \simeq \bigoplus_{a=1}^r (B(\mathbb C^{n_a})\otimes I_{m_a}).
\end{equation}
Since $\mathcal A\subsetneq B(H_A)$ is proper,  either $r>1$ or at least one $m_a>1$.

Because $H_B\simeq H_A$,  we also fix
\begin{equation}
H_B \simeq \bigoplus_{a=1}^r (\mathbb C^{n_a}\otimes \mathbb C^{m_a}).
\end{equation}
For each $a$,  write
\begin{equation}
A_a=A_{a, 1}\otimes A_{a, 2}, 
\qquad
B_a=B_{a, 1}\otimes B_{a, 2}, 
\end{equation}
with
\[
A_{a, 1}\simeq B_{a, 1}\simeq \mathbb C^{n_a}, 
\qquad
A_{a, 2}\simeq B_{a, 2}\simeq \mathbb C^{m_a}.
\]

Choose one index $a_0$. On $A_{a_0, 1}\otimes B_{a_0, 1}$ take the standard maximally entangled vector
\begin{equation}
\ket{\Phi_{n_{a_0}}}
:=
\frac{1}{\sqrt{n_{a_0}}}
\sum_{j=1}^{n_{a_0}} \ket{j}\otimes \ket{j}.
\end{equation}
Pick unit vectors $\ket{u}\in A_{a_0, 2}$ and $\ket{v}\in B_{a_0, 2}$ and define
\begin{equation}
\ket{\eta}:=\ket{u}\otimes \ket{v}\in A_{a_0, 2}\otimes B_{a_0, 2}.
\end{equation}

We choose a state which is maximally entangled on one simple factor visible to $\mathcal A$,  but trivial on the corresponding multiplicity space. Since every element of $\mathcal A$ acts as the identity on the multiplicity part,  this defect is invisible to all PVMs contained in $\mathcal A$.

\begin{equation}
\ket{\psi}:=\ket{\Phi_{n_{a_0}}}\otimes \ket{\eta}
\in
A_{a_0}\otimes B_{a_0}\subset H_A\otimes H_B, 
\end{equation}
and let
\begin{equation}
\rho:=\ket{\psi}\bra{\psi}.
\end{equation}

We show that $\rho$ is a ZUS for every PVM in $\mathcal A$. Let $\{E_\alpha\}_\alpha\subset \mathcal A$ be a PVM. Since each $E_\alpha\in \mathcal A$,  it has block form
\begin{equation}
E_\alpha
=
\bigoplus_{a=1}^r (E_\alpha^{(a)}\otimes I_{m_a}), 
\qquad
E_\alpha^{(a)}\in B(\mathbb C^{n_a}), 
\end{equation}
where for each fixed $a$ the family $\{E_\alpha^{(a)}\}_\alpha$ is a PVM on $\mathbb C^{n_a}$ (some projections may be zero).

Because $\rho$ is supported entirely on the single diagonal block $A_{a_0}\otimes B_{a_0}$,  only that block contributes:
\begin{equation}
Z_\alpha
:=
\Tr_A[(E_\alpha\otimes I)\rho]
=
\Tr_{A_{a_0}}[((E_\alpha^{(a_0)}\otimes I_{m_{a_0}})\otimes I)\rho].
\end{equation}
Separating the two tensor factors of $A_{a_0}$ gives
\begin{equation}
Z_\alpha
=
\Tr_{A_{a_0, 1}}\!\Big[(E_\alpha^{(a_0)}\otimes I)\ket{\Phi_{n_{a_0}}}\bra{\Phi_{n_{a_0}}}\Big]
\otimes
\Tr_{A_{a_0, 2}}(\ket{\eta}\bra{\eta}).
\end{equation}
Using
\begin{equation}
\Tr_{A_{a_0, 1}}\!\Big[(X\otimes I)\ket{\Phi_{n_{a_0}}}\bra{\Phi_{n_{a_0}}}\Big]
=
\frac{1}{n_{a_0}}X^T, 
\end{equation}
we obtain
\begin{equation}
Z_\alpha
=
\frac{1}{n_{a_0}}(E_\alpha^{(a_0)})^T\otimes \ket{v}\bra{v}.
\end{equation}
Hence for $\alpha\neq\beta$, 
\begin{equation}
Z_\alpha Z_\beta
=
\frac{1}{n_{a_0}^2}
(E_\alpha^{(a_0)})^T(E_\beta^{(a_0)})^T\otimes \ket{v}\bra{v}
=
0, 
\end{equation}
because $E_\alpha^{(a_0)}E_\beta^{(a_0)}=0$. Thus the $Z_\alpha$ are pairwise orthogonal,  so $\rho$ is a ZUS for every PVM contained in $\mathcal A$.

Finally, 
\begin{equation}
\rho_A
=
\Tr_B(\rho)
=
0\oplus\cdots\oplus
\left(\frac{1}{n_{a_0}}I_{n_{a_0}}\otimes \ket{u}\bra{u}\right)
\oplus\cdots\oplus 0.
\end{equation}
If $r>1$,  then $\rho_A$ is supported on a proper direct-sum block and cannot equal the maximally mixed state on $H_A$. If $r=1$,  then propriety of $\mathcal A$ forces $m_1>1$,  and
\[
\rho_A=\frac1{n_1}I_{n_1}\otimes \ket{u}\bra{u}
\]
is still not maximally mixed. Therefore $\rho$ is not globally maximally entangled.
\end{proof}

\section{Classification relative to $\mathcal A$}

We now give a classification theorem relative to a finite-dimensional observable algebra $\mathcal A$. Although this is not yet a complete classification of all $\mathcal A$-ZUS,  it can reveal a clear algebraic normal form and an operationally meaningful block structure.

Keep the previous notation. Let
\begin{equation}
\mathcal A\subseteq B(H_A)
\end{equation}
be a finite-dimensional unital $*$-subalgebra,  and define
\begin{equation}
\rho_B:=\Tr_A(\rho), \qquad S:=\supp(\rho_B).
\end{equation}
Recall the map
\begin{equation}
\Lambda(X):=\Tr_A[(X^T\otimes I)\rho], \qquad X\in B(H_A), 
\end{equation}
and its normalized version
\begin{equation}
\Phi(X):=\rho_B^{-1/2}\Lambda(X)\rho_B^{-1/2}, \qquad X\in B(H_A).
\end{equation}
Restricting to $\mathcal A^T$,  we define
\begin{equation}
\Lambda_{\mathcal A^T}:=\Lambda|_{\mathcal A^T}:\mathcal A^T\to B(H_B), 
\end{equation}
and
\begin{equation}
\phi:=\Phi|_{\mathcal A^T}:\mathcal A^T\to B(S).
\end{equation}

\begin{lemma}
\label{A ZUS equivalence}
The following two statements are equivalent:
\begin{enumerate}
    \item $\rho$ is an $\mathcal A$-ZUS,  i.e. for every PVM $\{E_\alpha\}\subseteq \mathcal A$,  the family
    \begin{equation}
    Z_\alpha:=\Tr_A[(E_\alpha\otimes I)\rho]
    =
    \Lambda(E_\alpha^T)
    =
    \Lambda_{\mathcal A^T}(E_\alpha^T)
    \end{equation}
    is orthogonal;
    \item $\phi:\mathcal A^T\to B(S)$ is a unital $*$-homomorphism and
    \begin{equation}
    \phi(\mathcal A^T)\subseteq \{\rho_B\}'.
    \end{equation}
\end{enumerate}
\end{lemma}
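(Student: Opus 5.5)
We prove the two implications separately. Both directions reuse the computations made for the rigidity theorem, restricted now to the subalgebra $\mathcal A^T$. Throughout, $\Lambda(X)$ is supported in $S$ for $X\ge 0$, since $0\le\Lambda(X)\le\|X\|\rho_B$. This justifies regarding $\phi$ as a map into $B(S)$ and inverting $\rho_B$ there. We also use that $\mathcal A^T$ is again a finite-dimensional unital $*$-subalgebra, because transposition is a $*$-anti-automorphism preserving projections and sums.

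\emph{(1)$\Rightarrow$(2).} Fix a projection $E\in\mathcal A$. Then $\{E, I-E\}$ is a PVM in $\mathcal A$, so $\Lambda(E^T)$ and $\Lambda(I-E^T)=\rho_B-\Lambda(E^T)$ have orthogonal supports. Equation \eqref{com} then gives
\[
\rho_B\Lambda(E^T)=\Lambda(E^T)\rho_B=\Lambda(E^T)^2 .
\]
Consequently $\Lambda(E^T)$ commutes with $\rho_B^{\pm1/2}$ on $S$. From this, $\phi(E^T)=\rho_B^{-1}\Lambda(E^T)$, and
\[
\phi(E^T)^2=\rho_B^{-2}\Lambda(E^T)^2=\rho_B^{-1}\Lambda(E^T)=\phi(E^T).
\]
So $\phi(E^T)$ is a projection commuting with $\rho_B$. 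As in the proof that $\MD(\Phi)=M_d(\mathbb C)$, this puts every projection of $\mathcal A^T$ into $\MD(\Phi)$. In a finite-dimensional $C^*$-algebra, projections span the whole algebra (spectral decomposition of self-adjoint elements), and $\MD(\Phi)$ is a $C^*$-algebra by the Proposition on multiplicative domains. Hence $\mathcal A^T\subseteq\MD(\Phi)$.

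The multiplicative-domain property (Eq.~\eqref{eq.P可以加} of the Appendix) then yields $\Phi(ab)=\Phi(a)\Phi(b)$ for $a,b\in\mathcal A^T$, by the same computation as in the Kraus-rank lemma. Thus $\phi$ is a unital $*$-homomorphism, and $\Phi(I)=I_S$ was already shown. The commutant condition follows by linearity, since $\phi(\mathcal A^T)$ is spanned by the commuting projections $\phi(E^T)$.

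\emph{(2)$\Rightarrow$(1).} Let $\{E_\alpha\}\subseteq\mathcal A$ be a PVM. Since $\phi$ is a unital $*$-homomorphism, the family $\{\phi(E_\alpha^T)\}$ consists of mutually orthogonal projections summing to $I_S$. Because each $\phi(E_\alpha^T)$ commutes with $\rho_B$, it also commutes with $\rho_B^{1/2}$ (a function of $\rho_B$). We can therefore write
\[
Z_\alpha=\rho_B^{1/2}\phi(E_\alpha^T)\rho_B^{1/2}=\rho_B\,\phi(E_\alpha^T),
\]
and for $\alpha\neq\beta$
\[
Z_\alpha Z_\beta=\rho_B\,\phi(E_\alpha^T)\phi(E_\beta^T)\,\rho_B=0 .
\]
Since these are positive operators, $Z_\alpha Z_\beta=0$ is equivalent to orthogonality of their supports. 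Here $Z_\alpha$ is extended by zero off $S$, which is legitimate because $\Lambda(E_\alpha^T)$ lives on $S$.

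\emph{Main obstacle.} The delicate point is in (1)$\Rightarrow$(2): passing from multiplicativity on individual projections to multiplicativity on all of $\mathcal A^T$. This rests on $\MD(\Phi)$ being an algebra, so that the product of two non-commuting projections $E^T F^T$ is handled. It must be done via the Stinespring argument of the Appendix and not by direct computation. A secondary technical point is to treat $\rho_B^{-1/2}$ carefully as the inverse on $S$ only.
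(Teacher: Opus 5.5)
Your proof is correct and follows essentially the same route as the paper. In (1)$\Rightarrow$(2) you apply Eq.~\eqref{com} to the PVM $\{E,I-E\}$ to show that each $\phi(E^T)$ is a projection commuting with $\rho_B$, put every projection of $\mathcal A^T$ into the multiplicative domain, and extend to all of $\mathcal A^T$ by linear spanning; in (2)$\Rightarrow$(1) you use the commutation of $\phi(E_\alpha^T)$ with $\rho_B$ to get $Z_\alpha Z_\beta=0$. Your small refinements only make the paper's argument slightly more careful: you work with $\MD(\Phi)$ of the full map on $B(H_A)$, which is exactly the setting of the Proposition, rather than $\MD(\phi)$, and you justify explicitly that $\Lambda(X)$ lives on $S$ via $0\le\Lambda(X)\le\|X\|\rho_B$ for $X\ge0$.
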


\begin{proof}

We first prove $(1)\Rightarrow(2)$.

Assume that $\rho$ is an $\mathcal A$-ZUS. Since $\phi$ is the restriction of the unital CP map $\Phi$,  it is again a unital CP map.

Take any projection
\[
P\in \mathcal A^T.
\]
Then $P^T\in \mathcal A$ is also a projection,  and because $\mathcal A$ is unital, 
\[
\{P^T, I-P^T\}\subseteq \mathcal A.
\]
Applying 
\begin{equation*}
    \rho_B\Lambda(P)=\Lambda(P)\rho_B=\Lambda(P)^2 \tag{\ref{com}}
\end{equation*} 

to the restricted domain yields
\begin{equation}
\Lambda_{\mathcal A^T}(P)\rho_B
=
\rho_B\Lambda_{\mathcal A^T}(P)
=
\Lambda_{\mathcal A^T}(P)^2.
\end{equation}
Hence $\Lambda_{\mathcal A^T}(P)$ commutes with $\rho_B$,  and
\begin{equation}
\phi(P)=\rho_B^{-1/2}\Lambda_{\mathcal A^T}(P)\rho_B^{-1/2}
\end{equation}
is a projection. By the multiplicative-domain criterion,  $P\in\MD(\phi)$. Since finite-dimensional $C^*$-algebras are linearly spanned by projections,  we conclude that
\begin{equation}
\mathcal A^T\subseteq \MD(\phi).
\end{equation}
Therefore $\phi$ is a unital $*$-homomorphism on all of $\mathcal A^T$. The commutation relation with $\rho_B$ follows from the displayed identity above,  so
\begin{equation}
\phi(\mathcal A^T)\subseteq \{\rho_B\}'.
\end{equation}

Now prove $(2)\Rightarrow(1)$. 

Assume that $\phi$ is a unital $*$-homomorphism and that
\begin{equation}
\phi(\mathcal A^T)\subseteq \{\rho_B\}'.
\end{equation}
Take any PVM $\{E_\alpha\}\subseteq\mathcal A$. Then $\{E_\alpha^T\}\subseteq \mathcal A^T$ is also a PVM,  and since $\phi$ is a unital $*$-homomorphism,  $\{\phi(E_\alpha^T)\}$ is a PVM on $S$.

The corresponding conditional operators are
\begin{equation}
Z_\alpha
=
\Tr_A[(E_\alpha\otimes I)\rho]
=
\Lambda(E_\alpha^T)
=
\Lambda_{\mathcal A^T}(E_\alpha^T)
=
\rho_B^{1/2}\phi(E_\alpha^T)\rho_B^{1/2}.
\end{equation}
For $\alpha\neq\beta$, 
\begin{align}
Z_\alpha Z_\beta
&=
\rho_B^{1/2}\phi(E_\alpha^T)\rho_B\phi(E_\beta^T)\rho_B^{1/2}\\
&=
\rho_B^{1/2}\rho_B\, \phi(E_\alpha^T)\phi(E_\beta^T)\rho_B^{1/2}\\
&=0, 
\end{align}
because the $\phi(E_\alpha^T)$ are orthogonal projections and commute with $\rho_B$. Thus the $Z_\alpha$ are pairwise orthogonal,  so $\rho$ is an $\mathcal A$-ZUS.
\end{proof}

It follows from the previous proposition that $\rho$ is an $\mathcal A$-ZUS if and only if the restricted map
\begin{equation}
\phi:=\Phi|_{\mathcal A^T}:\mathcal A^T\to B(S)
\end{equation}
is a unital $*$-homomorphism,  and
\begin{equation}
\phi(\mathcal A^T)\subseteq \{\rho_B\}'.
\end{equation}

Therefore,  by Lemma 5.1,  the $\mathcal{A}$-ZUS property is completely determined by the restricted data
$$ (\phi,  \rho_B),  \quad \phi = \Phi|_{\mathcal{A}^T} : \mathcal{A}^T \to B(S),  $$
where $\phi$ is a unital *-homomorphism and $\phi(\mathcal{A}^T) \subseteq \{\rho_B\}'$. 
Thus,  rather than attempting to classify the full bipartite state $\rho$,  we classify its $\mathcal{A}$-visible data. 
To do so,  we write $\mathcal{A}^T$ in Artin–Wedderburn form and determine the canonical form of all such pairs $(\phi,  \rho_B)$.

\begin{theorem}
Let
\begin{equation}
\mathcal A^T \simeq \bigoplus_{a=1}^r \bigl(M_{n_a}(\mathbb C)\otimes I_{m_a}\bigr), 
\end{equation}
be the Artin--Wedderburn decomposition of $\mathcal A ^T$ and let
\begin{equation}
\phi:\mathcal A^T\to B(S)
\end{equation}
be a unital $*$-homomorphism. Then there exist finite-dimensional Hilbert spaces $K_a$ and a unitary isomorphism
\begin{equation}
U:S\rightarrow \bigoplus_{a=1}^r \bigl(\mathbb C^{n_a}\otimes K_a\bigr)
\end{equation}
such that for any
\begin{equation}
X=\bigoplus_{a=1}^r \bigl(X_a\otimes I_{m_a}\bigr)\in\mathcal A^T, 
\end{equation}
we have
\begin{equation}
U\, \phi(X)\, U^*
=
\bigoplus_{a=1}^r \bigl(X_a\otimes I_{K_a}\bigr).
\end{equation}

Furthermore,  if the condition
\begin{equation}
\phi(\mathcal A^T)\subseteq \{\rho_B\}'
\end{equation}
is satisfied,  then under the same unitary isomorphism $U$, 
\begin{equation}
U\, \rho_B\, U^*
=
\bigoplus_{a=1}^r \bigl(I_{n_a}\otimes \tau_a\bigr), 
\end{equation}
where each $\tau_a\ge0$ acts on $K_a$. Consequently,  for any
\begin{equation}
X=\bigoplus_{a=1}^r \bigl(X_a\otimes I_{m_a}\bigr)\in\mathcal A^T, 
\end{equation}
we have
\begin{equation}
U\, \Lambda_{\mathcal A^T}(X)\, U^*
=
\bigoplus_{a=1}^r \bigl(X_a\otimes \tau_a\bigr).
\end{equation}
\end{theorem}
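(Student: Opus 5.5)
The plan is to proceed in three stages: first use the representation theory of finite-dimensional $C^*$-algebras for the first claim, then use the commutant of the image to decompose $\rho_B$, and finally recombine the two via $\Lambda=\rho_B^{1/2}\phi\,\rho_B^{1/2}$.

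\textbf{Stage 1: block form of $\phi$.} Let $e_a\in\mathcal A^T$ be the central projection onto the $a$-th summand. Since $\phi$ is a unital $*$-homomorphism, the operators $\phi(e_a)$ are mutually orthogonal projections summing to $I_S$. This gives $S=\bigoplus_a S_a$ with $S_a:=\phi(e_a)S$. On $S_a$, the map $X_a\mapsto \phi\bigl(0\oplus\cdots\oplus(X_a\otimes I_{m_a})\oplus\cdots\oplus 0\bigr)|_{S_a}$ is a unital $*$-representation of $M_{n_a}(\mathbb C)$, or the zero space if $\phi(e_a)=0$; in that case take $K_a=0$.

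Every unital representation of $M_{n}(\mathbb C)$ is unitarily equivalent to an amplification $X\mapsto X\otimes I_K$. I would prove this concretely. Set $K_a:=\phi(E^{(a)}_{11}\otimes I_{m_a})S_a$ and define
\[
U^*(\ket{j}\otimes\ket{k}):=\phi(E^{(a)}_{j1}\otimes I_{m_a})\ket{k}, \qquad \ket{k}\in K_a .
\]
The matrix-unit relations $\phi(E_{ij})\phi(E_{kl})=\delta_{jk}\phi(E_{il})$ and $\sum_j\phi(E_{jj})=\phi(e_a)$ show that this map is isometric and onto $S_a$, and that it intertwines $X_a\otimes I_{K_a}$ with $\phi$. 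Taking the direct sum over $a$ gives $U$.

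\textbf{Stage 2: block form of $\rho_B$.} Assume $\rho_B$ (restricted to $S$, where it lives) commutes with $\phi(\mathcal A^T)$. Then $U\rho_BU^*$ lies in the commutant of $\bigoplus_a\bigl(M_{n_a}\otimes I_{K_a}\bigr)$. Commuting with the central projections $\phi(e_a)$ kills the off-diagonal blocks between different $a$. Within each block, the commutant of $M_{n_a}\otimes I_{K_a}$ is $I_{n_a}\otimes B(K_a)$. This can be checked directly: commuting with $E_{ij}\otimes I$ forces the $n_a\times n_a$ operator blocks of the element to be $\delta_{ij}\tau_a$. Positivity of $\rho_B$ gives $\tau_a\ge0$. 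Since $\rho_B$ is invertible on $S$, each $\tau_a$ is in fact positive definite, although the statement does not need this.

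\textbf{Stage 3: the form of $\Lambda_{\mathcal A^T}$.} On $S$ we have $\Lambda(X)=\rho_B^{1/2}\Phi(X)\rho_B^{1/2}$. Conjugating by $U$, and using $U\rho_B^{1/2}U^*=\bigoplus_a I\otimes\tau_a^{1/2}$ together with the fact that this commutes with $X_a\otimes I$, gives $\bigoplus_a X_a\otimes\tau_a$.

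Here lies the one point needing care: $\Lambda(X)$ acts on $H_B$, not on $S$. Its support is controlled by $\rho_B$, since $0\le\Lambda(X)\le\|X\|\rho_B$ for $X\ge0$, and the general case follows by linearity. Hence $\Lambda(X)$ vanishes off $S$, and the identity should be read as an identity on $S$, extended by zero to $S^\perp$.

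The main obstacle, though routine, is the amplification lemma in Stage 1, in particular handling summands with $\phi(e_a)=0$ and checking that $U$ is a well-defined unitary. I would state it as a standard fact about finite-dimensional representations (citing Davidson) and sketch the matrix-unit construction above.
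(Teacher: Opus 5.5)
Your proposal is correct and follows essentially the same three-stage route as the paper's appendix proof: central projections split $S$, each block representation is reduced to the amplification form $X_a\otimes I_{K_a}$, the commutant formula gives the form of $\rho_B$, and the result for $\Lambda_{\mathcal A^T}$ follows from $\Lambda=\rho_B^{1/2}\phi\,\rho_B^{1/2}$. Your additions are refinements of steps the paper leaves implicit, not a different approach: the explicit matrix-unit construction of $U$ instead of citing the representation theorem, the case $\phi(e_a)=0$, the observation that each $\tau_a$ is automatically faithful, and the check via $0\le\Lambda(X)\le\|X\|\rho_B$ that $\Lambda(X)$ is supported in $S$.
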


\begin{proof}
A detailed proof is given in Appendix~\eqref{app:classification}. We only sketch the main idea here.

First,  by the Artin--Wedderburn decomposition,  $\mathcal A^T$ decomposes into a direct sum of several matrix algebra blocks. Second,  any finite-dimensional unital $*$-representation is equivalent,  on each matrix block,  to a direct sum of several copies of the standard representation; thus,  $\phi$ can be written in the form
\begin{equation}
\phi(X)=\bigoplus_{a=1}^r \bigl(X_a\otimes I_{K_a}\bigr).
\end{equation}
Finally,  from the condition
\begin{equation}
\phi(\mathcal A^T)\subseteq \{\rho_B\}'
\end{equation}
it follows that $\rho_B$ must belong to the commutant of the image of this representation,  which is precisely the direct sum of commutants of each individual block.
\begin{equation}
\bigoplus_{a=1}^r \bigl(I_{n_a}\otimes B(K_a)\bigr).
\end{equation}
Hence,  $\rho_B$ must take the block-diagonal form
\begin{equation}
\rho_B=\bigoplus_{a=1}^r \bigl(I_{n_a}\otimes \tau_a\bigr).
\end{equation}
Substituting this back into
\begin{equation}
\Lambda_{\mathcal A^T}(X)=\rho_B^{1/2}\phi(X)\rho_B^{1/2}
\end{equation}
yields the final canonical form.
\end{proof}

This theorem demonstrates that the $\mathcal A$-ZUS condition does not constrain the global structure of the entire bipartite state $\rho$,  but only its visible part relative to the observable subalgebra $\mathcal A$. Each irreducible matrix block $M_{n_a}(\mathbb C)$ is realized on Bob's side with some multiplicity space $K_a$,  while the degrees of freedom of $\rho_B$ reside entirely within the positive operators $\tau_a$ acting on these multiplicity spaces. 

This also provides us the intuition of dealing with the case of $\dim H_B>\dim H_A$,  which will be discussed in the next section.

\section{Larger memory dimension}

The operator-algebraic analysis developed above does not fundamentally rely on the
assumption $H_A\simeq H_B$. It is straightforward to verify that the definitions of $\Lambda$ and $\Phi$,  the
multiplicative-domain argument,  the characterization of $\mathcal A$-ZUS in terms of
the restricted homomorphism $\phi$,  and the normal-form theorem on
$S:=\supp(\rho_B)$
 remain valid,  which implies Lemma \eqref{A ZUS equivalence} also continues to hold.
The equal-dimension hypothesis first becomes essential only in the full-algebra case, 
when one tries to identify a unital $*$-representation of $M_d(\mathbb C)$
with an automorphism of $M_d(\mathbb C)$ and hence force Kraus rank one.
However,  for larger memory systems,  the key dimensional restriction $\dim H_A=\dim H_B$ is no longer available,  and the correct
replacement is a representation with multiplicity,  which coincides with the non-degenerate PVMs case studied by Zhu.

\begin{proposition}[Full algebra case with larger memory]
Let $\dim H_A=d$,  let $H_B$ be finite-dimensional with $\dim H_B\ge d$,  and let
$\mathcal K=\{K_i\}$ be a family of PVMs on $H_A$ such that
\begin{equation}
C^*(\{K_i\})=B(H_A).
\end{equation}
If $\rho$ is a common ZUS for $\mathcal K$,  then there exist a finite-dimensional Hilbert space $K$,  a unitary
\begin{equation}
    U:S\to \mathbb C^d\otimes K
\end{equation}

and a density operator $\sigma\in B(K)$ such that for all
$X\in B(H_A)\simeq M_d(\mathbb C)$, 
\begin{equation}
U\, \phi(X)\, U^*=X\otimes I_K, 
\end{equation}
\begin{equation}
U\, \rho_B\, U^*=\frac{1}{d}\, I_d\otimes \sigma, 
\end{equation}
and hence
\begin{equation}
U\, \Lambda(X)\, U^*=\frac{1}{d}\, X\otimes \sigma.
\end{equation}
\end{proposition}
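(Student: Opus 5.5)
The plan is to specialize the classification theorem of the previous section to the case $\mathcal A=B(H_A)$. Then the only new ingredient is a trace computation that pins down the scalar $1/d$.

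First, since the $C^*$-algebra generated by $\mathcal K$ is all of $B(H_A)$, the multiplicative-domain argument from the rigidity section applies verbatim, without using $\dim H_A=\dim H_B$. Each $P_{i,\alpha}^T$ is a projection mapped by $\Phi$ to a projection, so it lies in $\MD(\Phi)$. Hence $\MD(\Phi)\supseteq C^*(\{P_{i,\alpha}^T\})=B(H_A)$. Moreover, by \eqref{com} every $\Lambda(P_{i,\alpha}^T)$ commutes with $\rho_B$. The commutant $\{\rho_B\}'$ is a $*$-algebra and $\Phi$ is now a $*$-homomorphism, so $\Phi(B(H_A))=\Phi(C^*(\{P^T_{i,\alpha}\}))$ is generated by the $\Phi(P^T_{i,\alpha})$, and these commute with $\rho_B$. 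Here we use that $\rho_B^{-1/2}$ commutes with anything commuting with $\rho_B$ on $S$. Thus $\phi=\Phi$ is a unital $*$-homomorphism with $\phi(B(H_A))\subseteq\{\rho_B\}'$. Equivalently, $\rho$ is a $B(H_A)$-ZUS in the sense of Lemma~\ref{A ZUS equivalence}, because a PVM in $B(H_A)$ is arbitrary and the homomorphism property gives orthogonality directly.

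Next, apply the classification theorem with $r=1$, $n_1=d$, $m_1=1$ (note $B(H_A)^T=B(H_A)$). This gives a Hilbert space $K:=K_1$ and a unitary $U:S\to\mathbb C^d\otimes K$ with $U\phi(X)U^*=X\otimes I_K$ and $U\rho_BU^*=I_d\otimes\tau$ for some $\tau\ge0$ on $K$. Then $U\Lambda(X)U^*=X\otimes\tau$. Nonemptiness of $K$ follows from $\Tr\rho_B=1$, and $\dim S=d\dim K\le\dim H_B$ is consistent with the hypothesis $\dim H_B\ge d$.

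Finally, normalize. Since $1=\Tr\rho_B=d\,\Tr\tau$, set $\sigma:=d\,\tau$, which is a density operator on $K$. Then $U\rho_BU^*=\frac1d I_d\otimes\sigma$ and $U\Lambda(X)U^*=\frac1d X\otimes\sigma$. One should be careful that $\Lambda$ takes values in $B(H_B)$ while $U$ is defined on $S$. This is harmless because $\Lambda(X)=\rho_B^{1/2}\Phi(X)\rho_B^{1/2}$ is supported in $S$: for $X\ge0$ we have $\Lambda(X)\le\|X\|\rho_B$, and the general case follows by linearity. The main obstacle is the commutation step $\phi(B(H_A))\subseteq\{\rho_B\}'$ for the full algebra, i.e.\ passing from commutation on the generating projections to the whole algebra. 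I would handle it via the homomorphism property as above, or alternatively by citing Lemma~\ref{A ZUS equivalence} after observing that every projection $P\in B(H_A)$ satisfies \eqref{com} once $\Phi$ is known to be a $*$-homomorphism, since then $\Lambda(P)\Lambda(I-P)=0$.
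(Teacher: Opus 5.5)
Your proposal is correct and follows the paper's proof: specialize the normal-form theorem to the single block $r=1$, $n_1=d$, $m_1=1$, normalize via $1=\Tr\rho_B=d\,\Tr\tau$, and set $\sigma=d\tau$. Your explicit check that $\Phi$ is a $*$-homomorphism with $\Phi(B(H_A))\subseteq\{\rho_B\}'$, obtained by propagating the commutation from the generating projections through the homomorphism property, usefully fills a step the paper leaves implicit, since Lemma~\ref{A ZUS equivalence} presupposes the ZUS property for every PVM in $\mathcal A$ rather than only for a generating family.

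Drop the closing ``alternative'', though. Being a $*$-homomorphism does not by itself give $\Lambda(P)\Lambda(I-P)=\rho_B^{1/2}\Phi(P)\rho_B\Phi(I-P)\rho_B^{1/2}=0$ unless $\Phi(P)$ already commutes with $\rho_B$, so that route is circular.
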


\begin{proof}
Apply the classification theorem to the special case
\begin{equation}
\mathcal A=B(H_A).
\end{equation}
Then the Artin--Wedderburn decomposition has only one block,  so the theorem yields a
finite-dimensional Hilbert space $K$ and a unitary
\begin{equation}
U:S\to \mathbb C^d\otimes K
\end{equation}
such that
\begin{equation}
U\, \phi(X)\, U^*=X\otimes I_K
\qquad
\forall X\in M_d(\mathbb C), 
\end{equation}
and
\begin{equation}
\label{tau}
U\, \rho_B\, U^*=I_d\otimes \tau
\end{equation}
for some positive operator $\tau\in B(K)$.

Using
\begin{equation}
\Lambda(X)=\rho_B^{1/2}\phi(X)\rho_B^{1/2}, 
\end{equation}
we immediately obtain
\begin{align}
U\, \Lambda(X)\, U^*
&=U\, \rho_B^{1/2}(U^*U)\phi(X)(U^*U) \rho_B^{1/2} \, U^*
\\
&=(I_d\otimes \tau^{1/2})(X\otimes I_K)(I_d\otimes \tau^{1/2})
\\
&=
X\otimes \tau.
\end{align}
Finally,  
\begin{equation}
1=\Tr(\rho_B)=\Tr(U\rho_BU^*)=\Tr(I_d\otimes \tau)=d\, \Tr(\tau), 
\end{equation}

Therefore $\sigma=d\tau$ is the density operator we claimed. In fact,  from Eq.\eqref{tau},  we have 
\begin{equation}
    \sigma=\Tr_{\mathbb{C^d}} U \rho_B U^*
\end{equation}

which carries all flexibility of our state.
\end{proof}

Furthermore, by applying the Choi–Jamiołkowski isomorphism, one immediately obtains the explicit form of the global bipartite state:
\begin{equation}
    (I_A \otimes U) \rho (I_A \otimes U^*) = |\Phi^+\rangle \langle \Phi^+|_{AB_1} \otimes \sigma_{B_2} 
\end{equation}
where $|\Phi^+\rangle$ is the standard maximally entangled state on $H_A \otimes \mathbb{C}^d$ and $\sigma_{B_2}$ is the state on the multiplicity space $K$.

\begin{remark}[What replaces purity and global maximal entanglement]
In the equal-dimension case $H_A\simeq H_B$,  the multiplicity space $K$ is forced to
be one-dimensional,  which is exactly the rigidity mechanism behind the purity and
maximal-entanglement theorems proved above. When $\dim H_B>\dim H_A$,  this rigidity is lost.

Instead,  the proposition shows that the correct picture is
subsystem maximal entanglement with an ancilla.

Thus the extra Bob degrees of freedom act as spectators rather than destroying the zero-uncertainty structure,  which is similar with the non-degenerate PVMs case shown in Zhu's work.
\end{remark}

\begin{remark}[Proper subalgebras]
The construction of non-maximally-entangled common ZUS for proper subalgebras extends
immediately to larger memory systems.
Suppose $\rho_0$ is an $\mathcal A$-ZUS on
\begin{equation}
H_A\otimes \widetilde H_B
\end{equation}
and let $\omega$ be any state on an ancillary Hilbert space $K_{\mathrm{anc}}$.
Then
\begin{equation}
\rho:=\rho_0\otimes \omega
\end{equation}
is again an $\mathcal A$-ZUS on
\begin{equation}
H_A\otimes (\widetilde H_B\otimes K_{\mathrm{anc}}).
\end{equation}
Indeed,  for any PVM $\{E_\alpha\}\subseteq \mathcal A$,  the corresponding conditional
operators satisfy
\begin{equation}
\Tr_A[(E_\alpha\otimes I)\rho]
=
\Tr_A[(E_\alpha\otimes I)\rho_0]\otimes \omega, 
\end{equation}

so allowing larger memory on Bob's side does not by itself restore purity or maximal entanglement in the proper-subalgebra setting.
\end{remark}

\begin{remark}[Two manifestations of the non-rigidity mechanism]
In short,  their physical realizations are naturally different,  but they do share the same underlying operator-algebraic mechanism.

\begin{enumerate}
    \item the observable algebra on Alice's side is proper, 
    \[
    \mathcal A\subsetneq B(H_A), 
    \qquad
    H_B\simeq H_A, 
    \]
    and

    \item the observable algebra is full,  but the memory space is larger, 
    \[
    \mathcal A=B(H_A), 
    \qquad
    \dim H_B>\dim H_A.
    \]
\end{enumerate}

In both cases,  the rigid equal-dimension/full-algebra conclusion breaks:
zero uncertainty no longer forces the unique pure maximally entangled form.
At a deeper level,  these should be viewed as two manifestations of the same
operator-algebraic mechanism: the ZUS constraints determine only the irreducible
part of the observable algebra action,  while any degrees of freedom belonging to
its commutant remain unconstrained.

In the proper-subalgebra case,  this unconstrained sector already appears at the level of Alice's observable algebra. Under the Artin--Wedderburn decomposition
\[
H_A \cong \bigoplus_a \mathbb C^{n_a}\otimes \mathbb C^{m_a}, 
\qquad
\mathcal A \cong \bigoplus_a M_{n_a}(\mathbb C)\otimes I_{m_a}, 
\]
the ZUS constraints control the matrix factors $M_{n_a}(\mathbb C)$,  but leave the
block labels and multiplicity spaces invisible.

In the larger-memory full-algebra case,  the commutant on Alice's side is trivial, 
but the representation of $B(H_A)$ on $S=\supp(\rho_B)$ may occur with multiplicity:
\[
\phi(X)\cong X\otimes I_K.
\]
Here the unconstrained sector is the multiplicity space $K$,  equivalently the
commutant $I_d\otimes B(K)$ of the represented algebra.

Thus the two settings differ in where the free sector sits,  but they are governed
by the same representation-theoretic source of non-rigidity: the presence of a
nontrivial commutant or multiplicity space invisible to the ZUS constraints.
\end{remark}

\section{An operator-algebraic characterization of perfect coarse-grained steering}

The operator-algebraic framework developed above admits a natural interpretation in the language of quantum steering.
Suppose Alice and Bob share a bipartite state $\rho$ on $H_A\otimes H_B$,  and let
\begin{equation}
\mathcal M_x=\{P_{x, a}\}_a
\end{equation}
be a family of projective measurements on $H_A$,  where the projections $P_{x, a}$ are allowed to be degenerate.
The corresponding steering assemblage on Bob's side is
\begin{equation}
\sigma_{a|x}:=\Tr_A[(P_{x, a}\otimes I)\rho].
\end{equation}
For each fixed setting $x$,  the family $\{\sigma_{a|x}\}_a$ consists of subnormalized states satisfying
\begin{equation}
\sum_a \sigma_{a|x}=\rho_B.
\end{equation}

A natural idealized steering task is the following: once Alice announces the measurement setting $x$,  can Bob determine her outcome $a$ with zero error from the conditional state $\sigma_{a|x}$?
In the present finite-dimensional setting,  this is equivalent to the pairwise orthogonality condition
\begin{equation}
\supp(\sigma_{a|x})\perp \supp(\sigma_{b|x}), 
\qquad a\neq b, 
\end{equation}
and equivalently, 
\begin{equation}
\sigma_{a|x}\sigma_{b|x}=0, 
\qquad a\neq b.
\end{equation}
Thus,  common ZUS for a family of PVMs are precisely the bipartite states that realize what one may call \emph{perfect coarse-grained steering} for that family: each measurement setting steers Bob to mutually orthogonal outcome sectors,  even when the outcomes correspond to degenerate subspaces rather than one-dimensional eigenspaces.

This leads to the following concrete operational problem:

\begin{quote}
Given a family of possibly degenerate projective measurements,  characterize those bipartite states for which the associated steering assemblage is perfectly distinguishable for every setting,  and determine when this forces maximal entanglement.
\end{quote}

Our results answer this question in the finite-dimensional setting at the level of the generated observable algebra.
If the PVM family generates the full algebra $B(H_A)$ and $H_A\simeq H_B$,  then every such realization arises from a pure maximally entangled state.
By contrast,  if the generated algebra is a proper finite-dimensional unital $*$-subalgebra,  then perfect coarse-grained steering may still occur,  but the underlying state need not be globally maximally entangled.
More generally,  the normal-form theorem shows that the loss of rigidity is completely governed by the representation multiplicities of the observable algebra on Bob's side together with a commuting memory state on the corresponding multiplicity spaces.

\subsection*{An example of degenerate steering}

We now give a concrete example illustrating why degenerate PVMs arise naturally in this steering task.
Let
\begin{equation}
H_A=H_B=\mathbb C^3, 
\qquad
|\Phi_3\rangle=\frac{1}{\sqrt 3}\sum_{j=0}^2 |jj\rangle, 
\qquad
\rho=|\Phi_3\rangle\langle\Phi_3|.
\end{equation}
Consider the following two binary projective measurements on Alice's side:
\begin{equation}
\mathcal P=\{P_0, P_1\}, 
\qquad
P_0=|0\rangle\langle0|+|1\rangle\langle1|, 
\qquad
P_1=|2\rangle\langle2|, 
\end{equation}
and
\begin{equation}
\mathcal Q=\{Q_0, Q_1\}, 
\qquad
Q_1=|v\rangle\langle v|, 
\qquad
Q_0=I-|v\rangle\langle v|, 
\qquad
|v\rangle=\frac{|1\rangle+|2\rangle}{\sqrt2}.
\end{equation}
Both $\mathcal P$ and $\mathcal Q$ are nontrivial coarse-grained PVMs,  and each contains a degenerate projection.
Moreover,  the two measurements are incompatible: their projections do not all commute.

For the maximally entangled state,  one has
\begin{equation}
\Tr_A[(X\otimes I)\rho]=\frac{X^T}{3}, 
\qquad X\in B(H_A).
\end{equation}
Hence the corresponding assemblage elements are
\begin{equation}
\sigma_{a|\mathcal P}=\frac{P_a^T}{3}, 
\qquad
\sigma_{b|\mathcal Q}=\frac{Q_b^T}{3}.
\end{equation}
For each fixed measurement setting,  these operators have orthogonal supports:
\begin{equation}
\supp(\sigma_{0|\mathcal P})\perp \supp(\sigma_{1|\mathcal P}), 
\qquad
\supp(\sigma_{0|\mathcal Q})\perp \supp(\sigma_{1|\mathcal Q}).
\end{equation}
Therefore,  once the setting is known,  Bob can identify Alice's outcome with zero error.

The point of this example is that the relevant steering task is not the recovery of a fine-grained rank-one outcome,  but the perfect identification of a \emph{subspace label}.
For $\mathcal P$,  Bob learns whether Alice's system lies in the two-dimensional sector
\begin{equation}
\mathrm{span}\{|0\rangle, |1\rangle\}
\end{equation}
or in the one-dimensional sector
\begin{equation}
\mathrm{span}\{|2\rangle\}.
\end{equation}
For $\mathcal Q$,  he learns whether the system lies in the distinguished direction $|v\rangle$ or in its orthogonal complement.
This is exactly the kind of operational scenario in which degenerate PVMs are the natural observables,  whereas a formulation restricted to non-degenerate eigenbases would be unnecessarily fine-grained.

\subsection*{Interpretation}

From this viewpoint,  the correct notion of rigidity for perfect steering with degenerate measurements is controlled not by individual eigenvectors,  but by the observable algebra generated by the coarse-grained projectors.
When this algebra is the full matrix algebra,  perfect coarse-grained steering is rigid.
If $H_A\simeq H_B$,  it can only arise from a pure maximally entangled state.
If instead $\dim H_B>\dim H_A$,  the same rigidity survives in the weaker form of subsystem maximal entanglement,  with Bob's remaining degrees of freedom appearing as an ancillary memory state.
When the observable algebra is proper,  perfect coarse-grained steering can persist without global maximal entanglement,  and the normal-form theorem identifies the precise algebraic source of this loss of rigidity.

\section{Conclusion and outlook}

We introduced an operator-algebraic framework for zero-uncertainty states (ZUS) in the presence of quantum memory,  extending the study of ZUS from the non-degenerate setting to degenerate PVMs. Our results show that the rigidity of ZUS is governed by the algebra generated by Alice's observables: when $H_A\simeq H_B$,  full algebra generation forces every common ZUS to be pure and maximally entangled,  whereas proper subalgebras allow explicitly constructible non-maximally-entangled common ZUS. More generally,  the normal form for $\mathcal A$-ZUS identifies the representation-theoretic source of non-rigidity,  and in the larger-memory case $\dim H_B>\dim H_A$ the rigid equal-dimension picture is replaced by subsystem maximal entanglement together with an ancillary memory state. Finally,  our method can be used to naturally solve certain types of questions in quantum steering.

A natural open problem is to obtain a more complete classification of all $\mathcal A$-ZUS and to further clarify the operational role of the present algebraic framework in quantum-information tasks with memory.

\section{Acknowledgments}

Huangjun Zhu's work on zero-uncertainty states in the presence of quantum memory is the main inspiration for this paper. I am profoundly grateful to Ma-ke Yuan,  Zhaoxuan Bian,  Yuhang Wu,  and Yukun Wu for their encouragement throughout my undergraduate years and for our helpful academic discussions. I also appreciate Runze Ge,  Rongjian Li,  Wei Lin,  Asutosh Paudyal,  and Limei Yuan for their general help and support. I sincerely thank Prof. Canbin Liang, Prof. Frederic Schuller, Prof. Yi Wang and Prof. Yidun Wan for their insightful courses. Furthermore,  I deeply appreciate my family and friends for their countless support throughout the progression of this work and my entire undergraduate journey. Finally,  I acknowledge the use of generative AI tools (ChatGPT and Gemini) in the preparation of this manuscript,  including assistance with literature search,  language polishing,  technical calculations,  and double-checking parts of the proofs. Ultimately,  all mathematical results,  interpretations,  and final wording remain my own responsibility.

\nocite{*}

\bibliographystyle{unsrt}
\bibliography{ZUSrefs}

\appendix

\section{A standard proof for the multiplicative domain}\label{app:md}

We record the standard argument used in Section~3.

\begin{proof}[Proof of the proposition stated in Section~3]
By Stinespring's theorem\cite{Stinespring1955},  there exist a Hilbert space $K$,  a $*$-representation
\[
\pi:M_d(\mathbb C)\to B(K), 
\]
and an isometry $V:H\to K$ such that
\begin{equation}
\Phi(x)=V^*\pi(x)V, \qquad x\in M_d(\mathbb C).
\end{equation}
Let
\begin{equation}
P:=VV^*\in B(K).
\end{equation}
Then $P$ is the orthogonal projection onto $\Ran(V)$ and $PV=V$.

For any $a\in M_d(\mathbb C)$, 
\begin{align*}
\Phi(a^*a)-\Phi(a)^*\Phi(a)
&=
V^*\pi(a^*a)V-(V^*\pi(a)V)^*(V^*\pi(a)V)\\
&=
V^*\pi(a)^*(I-P)\pi(a)V\ge 0.
\end{align*}
Similarly, 
\begin{equation}
\Phi(aa^*)-\Phi(a)\Phi(a)^*=V^*\pi(a)(I-P)\pi(a)^*V\ge 0.
\end{equation}

We claim that
\begin{equation}
a\in \MD(\Phi)\iff \pi(a)P=P\pi(a).
\end{equation}
Indeed,  if $a\in\MD(\Phi)$,  then both positive operators above vanish. Therefore
\begin{equation}
(I-P)\pi(a)V=0, 
\qquad
(I-P)\pi(a)^*V=0.
\end{equation}
Equivalently, 
\begin{equation}
\label{eq.P可以加}
\pi(a)V=P\pi(a)V, 
\qquad
\pi(a)^*V=P\pi(a)^*V.
\end{equation}
Multiplying the first equality on the right by $V^*$ and taking adjoints in the second yields
\[
\pi(a)P=P\pi(a).
\]
Conversely,  if $\pi(a)P=P\pi(a)$,  then since $PV=V$ we have
\[
(I-P)\pi(a)V=(I-P)\pi(a)PV=(I-P)P\pi(a)V=0, 
\]
and similarly for $\pi(a)^*V$. Hence both Kadison--Schwarz defects vanish,  so $a\in\MD(\Phi)$.

Finally, 
\begin{equation}
\MD(\Phi)=\{a\in M_d(\mathbb C):\pi(a)P=P\pi(a)\}.
\end{equation}
This is the inverse image under the $*$-homomorphism $\pi$ of the commutant of $P$,  hence a norm-closed unital $*$-subalgebra.
\end{proof}

\section{Detailed proof of the normal-form theorem}\label{app:classification}

We offer a detailed proof of the normal-form theorem in this part.

\begin{theorem*}[Classification of $\mathcal A$-observable zero-uncertainty states]
Suppose
\begin{equation}
\mathcal A^T \simeq \bigoplus_{a=1}^r \bigl(M_{n_a}(\mathbb C)\otimes I_{m_a}\bigr)
\end{equation}
is the Artin--Wedderburn decomposition of the finite-dimensional $C^*$-algebra $\mathcal A^T$ and $$S=\supp \rho_B \subseteq H_B$$Let
\begin{equation}
\phi:\mathcal A^T\to B(S)
\end{equation}
be a unital $*$-homomorphism. Then there exist finite-dimensional Hilbert spaces $K_a$ and a unitary isomorphism
\begin{equation}
U:S\rightarrow \bigoplus_{a=1}^r \bigl(\mathbb C^{n_a}\otimes K_a\bigr)
\end{equation}
such that for every
\begin{equation}
X=\bigoplus_{a=1}^r (X_a\otimes I_{m_a})\in \mathcal A^T
\end{equation}
one has
\begin{equation}
U\, \phi(X)\, U^*
=
\bigoplus_{a=1}^r (X_a\otimes I_{K_a}).
\end{equation}

If,  in addition, 
\begin{equation}
\phi(\mathcal A^T)\subseteq \{\rho_B\}', 
\end{equation}
then with respect to the same unitary $U$,  the operator $\rho_B$ has the form
\begin{equation}
U\, \rho_B\, U^*
=
\bigoplus_{a=1}^r (I_{n_a}\otimes \tau_a), 
\end{equation}
where each $\tau_a\in B(K_a)$ is positive. Consequently,  for every
\begin{equation}
X=\bigoplus_{a=1}^r (X_a\otimes I_{m_a})\in \mathcal A^T
\end{equation}
we have
\begin{equation}
U\, \Lambda_{\mathcal A^T}(X)\, U^*
=
\bigoplus_{a=1}^r (X_a\otimes \tau_a).
\end{equation}
\end{theorem*}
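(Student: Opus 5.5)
The plan is to prove the three assertions in order: the representation normal form for $\phi$, the block form of $\rho_B$, and then the formula for $\Lambda_{\mathcal A^T}$. Throughout, the abstract algebra is $\mathcal B:=\bigoplus_{a=1}^r M_{n_a}(\mathbb C)$, and $\mathcal A^T$ is the image of $\mathcal B$ under $(X_a)_a\mapsto\bigoplus_a(X_a\otimes I_{m_a})$. This map is a $*$-isomorphism, so $\phi$ induces a unital $*$-representation $\tilde\phi$ of $\mathcal B$ on $S$, and it suffices to work with $\tilde\phi$.

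\textbf{Step 1 (central decomposition).} Let $z_a\in\mathcal B$ denote the central projection onto the $a$-th summand. Then $Z_a:=\tilde\phi(z_a)$ are mutually orthogonal projections on $S$ summing to $I_S$, because $\tilde\phi$ is unital. Setting $S_a:=Z_aS$ gives $S=\bigoplus_a S_a$. Since each $z_a$ is central, every $\tilde\phi(X)$ commutes with every $Z_a$. Hence $\tilde\phi$ is block diagonal, and each restriction $\tilde\phi_a:M_{n_a}(\mathbb C)\to B(S_a)$ is a unital $*$-homomorphism. If $S_a=\{0\}$, we take $K_a=\{0\}$.

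\textbf{Step 2 (representations of a full matrix algebra).} Fix $a$ and write $F_{ij}:=\tilde\phi_a(E_{ij})$ for the images of the matrix units. Put $K_a:=F_{11}S_a$. Then $F_{11}$ is a projection, and $F_{j1}$ restricts to an isometry from $K_a$ onto $F_{jj}S_a$, because $F_{j1}^*F_{j1}=F_{11}$. The $F_{jj}$ are orthogonal projections summing to $I_{S_a}$. Consequently the map $W_a:\mathbb C^{n_a}\otimes K_a\to S_a$, $\ket{j}\otimes k\mapsto F_{j1}k$, is unitary. A direct check on matrix units gives $W_a^*F_{ij}W_a=E_{ij}\otimes I_{K_a}$, and linearity extends this to all of $M_{n_a}(\mathbb C)$. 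We then set $U:=\bigoplus_aW_a^*$. This step carries the real content of the proof, and the main obstacle is the bookkeeping needed to verify that $W_a$ is unitary and intertwines the two representations. Everything else is formal.

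\textbf{Step 3 (commutant).} The second assertion is now a commutant computation. Under $U$, the image $U\phi(\mathcal A^T)U^*$ equals $\bigoplus_a(M_{n_a}\otimes I_{K_a})$, where nonempty blocks are understood. Its commutant consists of the operators commuting with all the central projections, which forces them to be block diagonal. Within each block, commuting with $M_{n_a}\otimes I$ forces the form $I_{n_a}\otimes B(K_a)$. So the hypothesis $\phi(\mathcal A^T)\subseteq\{\rho_B\}'$ yields $U\rho_BU^*=\bigoplus_a(I_{n_a}\otimes\tau_a)$. Each $\tau_a$ is positive, as a compression of the positive operator $U\rho_BU^*$ onto $\ket{1}\otimes K_a$.

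\textbf{Step 4 (formula for $\Lambda$).} Finally, for $X\in\mathcal A^T$ we have $\Lambda_{\mathcal A^T}(X)=\rho_B^{1/2}\phi(X)\rho_B^{1/2}$. This holds because $\Lambda(X)$ is supported in $S$: indeed $0\le\Lambda(P)\le\rho_B$ for every projection $P$, and finite-dimensional $C^*$-algebras are spanned by their projections. Conjugating by $U$ and using $(U\rho_BU^*)^{1/2}=\bigoplus_a(I\otimes\tau_a^{1/2})$, the product collapses blockwise to $\bigoplus_a(X_a\otimes\tau_a)$, which is the claimed formula.
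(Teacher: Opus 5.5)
Your proposal is correct and follows the paper's proof essentially step for step. Both use the central projections $z_a$ to split $S=\bigoplus_a S_a$, put each block in the standard form $X_a\otimes I_{K_a}$, use the commutant $\bigoplus_a (I_{n_a}\otimes B(K_a))$ to force the form of $\rho_B$, and substitute into $\Lambda_{\mathcal A^T}(X)=\rho_B^{1/2}\phi(X)\rho_B^{1/2}$. The only differences are that you build the intertwining unitary $W_a$ explicitly from matrix units, where the paper cites the standard classification of representations of $M_{n_a}(\mathbb C)$, and that you justify $\Lambda(X)=\rho_B^{1/2}\phi(X)\rho_B^{1/2}$ via $0\le\Lambda(P)\le\rho_B$, an identity the paper uses without comment.
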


\begin{proof}
The proof has three steps.

\medskip

\noindent\textbf{Step 1: reduce $\phi$ to standard block form.}

By the Artin--Wedderburn theorem, 
\begin{equation}
\mathcal A^T \simeq \bigoplus_{a=1}^r (M_{n_a}(\mathbb C)\otimes I_{m_a}).
\end{equation}
For each $a$,  define
\begin{equation}
z_a
=
0\oplus \cdots \oplus (I_{n_a}\otimes I_{m_a}) \oplus \cdots \oplus 0
\in \mathcal A^T.
\end{equation}
Then $\{z_a\}_{a=1}^r$ is a family of pairwise orthogonal central projections with
\begin{equation}
z_a^2=z_a, \qquad z_a^*=z_a, \qquad z_az_b=0\ (a\neq b), \qquad \sum_{a=1}^r z_a=I.
\end{equation}
Hence it is a PVM in $\mathcal A^T$.

Since $\phi$ is a unital $*$-homomorphism,  the operators
\begin{equation}
Q_a:=\phi(z_a)
\end{equation}
form a PVM on $S$,  and therefore
\begin{equation}
S=\bigoplus_{a=1}^r S_a, \qquad S_a:=Q_aS.
\end{equation}
Because each $z_a$ is central,  for all $X\in\mathcal A^T$, 
\begin{equation}
Q_a\phi(X)=\phi(X)Q_a.
\end{equation}
Thus every $S_a$ is a reducing subspace for $\phi(\mathcal A^T)$.

Let
\begin{equation}
\iota_a:M_{n_a}(\mathbb C)\otimes I_{m_a}\to \mathcal A^T
\end{equation}
denote the natural embedding of the $a$-th direct-sum block, 
\begin{equation}
\iota_a(Y)=0\oplus\cdots\oplus Y\oplus\cdots\oplus 0.
\end{equation}
Define
\begin{equation}
\phi_a(Y):=\phi(\iota_a(Y))|_{S_a}, 
\qquad
Y\in M_{n_a}(\mathbb C)\otimes I_{m_a}.
\end{equation}
Since both $\iota_a$ and $\phi$ are unital $*$-homomorphisms and $S_a$ is reducing,  each
\begin{equation}
\phi_a:M_{n_a}(\mathbb C)\otimes I_{m_a}\to B(S_a)
\end{equation}
is again a unital $*$-homomorphism.

At this point,  the problem has been decomposed into independent representation-theoretic problems on each simple block. For each $a$,  define the natural unital $*$-isomorphism
\begin{equation}
j_a:M_{n_a}(\mathbb C)\to M_{n_a}(\mathbb C)\otimes I_{m_a}, 
\qquad
j_a(X)=X\otimes I_{m_a}, 
\end{equation}
and set
\begin{equation}
\pi_a:=\phi_a\circ j_a:M_{n_a}(\mathbb C)\to B(S_a).
\end{equation}
By the standard classification of finite-dimensional $*$-representations of full matrix algebras,  there exists a finite-dimensional Hilbert space $K_a$ and a unitary isomorphism
\begin{equation}
U_a:S_a\to \mathbb C^{n_a}\otimes K_a
\end{equation}
such that for every $X_a\in M_{n_a}(\mathbb C)$, 
\begin{equation}
U_a\, \pi_a(X_a)\, U_a^*=X_a\otimes I_{K_a}.
\end{equation}
Equivalently, 
\begin{equation}
U_a\, \phi_a(X_a\otimes I_{m_a})\, U_a^*
=
X_a\otimes I_{K_a}.
\end{equation}

Now define
\begin{equation}
U:=\bigoplus_{a=1}^r U_a:
S=\bigoplus_{a=1}^r S_a
\to
\bigoplus_{a=1}^r (\mathbb C^{n_a}\otimes K_a).
\end{equation}
Then for every
\begin{equation}
X=\bigoplus_{a=1}^r (X_a\otimes I_{m_a})\in \mathcal A^T, 
\end{equation}
we obtain
\begin{equation}
U\, \phi(X)\, U^*
=
\bigoplus_{a=1}^r (X_a\otimes I_{K_a}).
\end{equation}

\medskip

\noindent\textbf{Step 2: determine the form of $\rho_B$ under the commutation condition.}

From Step 1, 
\begin{equation}
U\, \phi(\mathcal A^T)\, U^*
=
\left\{
\bigoplus_{a=1}^r (X_a\otimes I_{K_a})
:\;
X_a\in M_{n_a}(\mathbb C)
\right\}
=
\bigoplus_{a=1}^r (M_{n_a}(\mathbb C)\otimes I_{K_a}).
\end{equation}
Let
\begin{equation}
\mathcal M:=\bigoplus_{a=1}^r (M_{n_a}(\mathbb C)\otimes I_{K_a}).
\end{equation}

Assume in addition that
\begin{equation}
\phi(\mathcal A^T)\subseteq \{\rho_B\}'.
\end{equation}
Then for every $X\in \mathcal A^T$, 
\begin{equation}
(U\phi(X)U^*)(U\rho_B U^*)=(U\rho_B U^*)(U\phi(X)U^*), 
\end{equation}
so
\begin{equation}
U\rho_B U^*\in \mathcal M'.
\end{equation}
By the standard finite-dimensional commutant formula, 
\begin{equation}
\mathcal M'
=
\left(
\bigoplus_{a=1}^r (M_{n_a}(\mathbb C)\otimes I_{K_a})
\right)'
=
\bigoplus_{a=1}^r (I_{n_a}\otimes B(K_a)).
\end{equation}
Hence there exist operators $\tau_a\in B(K_a)$ such that
\begin{equation}
U\rho_B U^*
=
\bigoplus_{a=1}^r (I_{n_a}\otimes \tau_a).
\end{equation}
Since $\rho_B\ge 0$ and unitary conjugation preserves positivity,  we have $U\rho_B U^*\ge 0$,  and therefore each $\tau_a$ is positive.

If one uses additionally $S=\supp(\rho_B)$,  then each $K_a$ may be replaced by $\supp(\tau_a)$,  so without loss of generality one may assume that each $\tau_a$ is faithful on $K_a$.

\medskip

\noindent\textbf{Step 3: derive the normal form of $\Lambda_{\mathcal A^T}$.}

For $X\in \mathcal A^T$, 
\begin{equation}
\Lambda_{\mathcal A^T}(X)=\rho_B^{1/2}\phi(X)\rho_B^{1/2}.
\end{equation}
If
\begin{equation}
X=\bigoplus_{a=1}^r (X_a\otimes I_{m_a})\in \mathcal A^T, 
\end{equation}
then from Steps 1 and 2, 
\begin{equation}
U\, \phi(X)\, U^*
=
\bigoplus_{a=1}^r (X_a\otimes I_{K_a}), 
\end{equation}
and
\begin{equation}
U\, \rho_B^{1/2}\, U^*
=
\bigoplus_{a=1}^r (I_{n_a}\otimes \tau_a^{1/2}).
\end{equation}
Therefore
\begin{align}
U\, \Lambda_{\mathcal A^T}(X)\, U^*
&=
U\, \rho_B^{1/2}\phi(X)\rho_B^{1/2}\, U^* \\
&=
\left(\bigoplus_{a=1}^r (I_{n_a}\otimes \tau_a^{1/2})\right)
\left(\bigoplus_{a=1}^r (X_a\otimes I_{K_a})\right)
\left(\bigoplus_{a=1}^r (I_{n_a}\otimes \tau_a^{1/2})\right) \\
&=
\bigoplus_{a=1}^r (X_a\otimes \tau_a).
\end{align}
This is the desired normal form.
\end{proof}


\section{Two physical realizations of the same non-rigidity mechanism}

The same operator-algebraic source of non-rigidity can arise from physically different
placements of the inaccessible sector. We record two minimal examples.

$\textbf{1}$.Let
\begin{equation}
H_A=A_1\otimes A_2, \qquad H_B=B_1\otimes B_2, 
\end{equation}
with
\begin{equation}
A_1, A_2, B_1, B_2\simeq \mathbb{C}^2.
\end{equation}
Consider the proper observable algebra
\begin{equation}
\mathcal{A}=B(A_1)\otimes I_{A_2}\subsetneq B(H_A), 
\end{equation}
and the state
\begin{equation}
\rho= |\Phi^+\rangle\langle\Phi^+|_{A_1B_1}\otimes \omega_{A_2B_2}, 
\end{equation}
where
\begin{equation}
|\Phi^+\rangle=\frac{1}{\sqrt{2}}(|00\rangle+|11\rangle)
\end{equation}
and $\omega$ is an arbitrary state on $A_2\otimes B_2$.

For any PVM $\{E_\alpha\}\subseteq B(A_1)$,  the corresponding PVM in $\mathcal{A}$ is
\begin{equation}
\{E_\alpha\otimes I_{A_2}\}.
\end{equation}
The conditional operators on Bob's side are
\begin{equation}
\begin{split}
\text{Tr}_A[((E_\alpha\otimes I_{A_2})\otimes I_B)\rho]
&= \text{Tr}_{A_1}[(E_\alpha\otimes I_{B_1})|\Phi^+\rangle\langle\Phi^+|] \otimes \text{Tr}_{A_2}(\omega) \\
&= \frac{1}{2}\, E_\alpha^T\otimes \tau, 
\end{split}
\end{equation}
where
\begin{equation}
\tau:=\text{Tr}_{A_2}(\omega).
\end{equation}
Hence,  if $\alpha\neq \beta$,  then
\begin{equation}
\left(\frac{1}{2}\, E_\alpha^T\otimes \tau\right) \left(\frac{1}{2}\, E_\beta^T\otimes \tau\right)=0, 
\end{equation}
because $E_\alpha E_\beta=0$ implies $E_\alpha^T\, E_\beta^T=0$.
Thus $\rho$ is an $\mathcal{A}$-ZUS.

\smallskip

\noindent

Here the inaccessible sector is already present on Alice's side: the observables in
$\mathcal{A}$ only probe $A_1$,  while $A_2$ remains invisible to the ZUS constraints.
The non-rigidity therefore comes from incomplete observable access on Alice's side.

$\textbf{2}$.Let
\begin{equation}
H_A=A\simeq \mathbb{C}^2,  \qquad H_B=B_1\otimes B_2,  \qquad B_1, B_2\simeq \mathbb{C}^2, 
\end{equation}
and take the full observable algebra
\begin{equation}
\mathcal{A}=B(H_A)=B(A).
\end{equation}
Consider the state
\begin{equation}
\rho= |\Phi^+\rangle\langle\Phi^+|_{AB_1}\otimes \sigma_{B_2}, 
\end{equation}
where $\sigma$ is an arbitrary state on $B_2$.

For any PVM $\{E_\alpha\}\subseteq B(A)$,  the conditional operators on Bob's side are
\begin{equation}
\begin{split}
\text{Tr}_A[(E_\alpha\otimes I_B)\rho]
&= \text{Tr}_A[(E_\alpha\otimes I_{B_1})|\Phi^+\rangle\langle\Phi^+|] \otimes \sigma \\
&= \frac{1}{2}\, E_\alpha^T\otimes \sigma.
\end{split}
\end{equation}
Therefore,  for $\alpha\neq\beta$, 
\begin{equation}
\left(\frac{1}{2}\, E_\alpha^T\otimes \sigma\right) \left(\frac{1}{2}\, E_\beta^T\otimes \sigma\right)=0, 
\end{equation}
so $\rho$ is a common ZUS for the full algebra $B(H_A)$.

\smallskip

\noindent

In this case Alice's observable algebra is already full,  so there are no hidden degrees
of freedom on her side. The non-rigidity comes instead from the extra memory factor
$B_2$ on Bob's side,  which acts as an ancillary spectator system.

\end{document}